\newtheorem{proposition}{Proposition}
\newtheorem{definition}{Definition}
\newtheorem{lemma}{Lemma}
\newtheorem{corollary}{Corollary}
\newenvironment{proof}{\paragraph{Proof:}}{\hfill$\square$}
\let\emptyset\varnothing
\begin{document}

\title{A Theory of Investors' Disclosure\thanks{Accepted by Naomi Rothenberg. A previous version of this paper was entitled ``Short, Disclose, and Distort". We would like to thank two anonymous reviewers for their constructive and thoughtful suggestions. We also appreciate helpful
		comments from Janja Brendel, Eti Einhorn, Ilan Guttman, Peicong (Keri) Hu, Xue Jia, Jing Li, Suil Pae, Evgeny Petrov, Chao Tang, Tsahi Versano, Wenfeng Wang, Zheng Wang, Forester Wong, Joanna Wu, Wuyang Zhao, and workshop participants at the AES Asia-Pacific Webinar, Chinese University of Hong Kong, and Tel Aviv University. The work described in this paper was partially supported by grants from the Research Grants Council of the Hong Kong Special Administrative Region,
		China [Project No. CityU 21600421 and HKU 17504622].}}
	\date{}
\author{Pingyang Gao\thanks{University of Hong Kong. Email: pgao@hku.hk} and Jinzhi Lu\thanks{
		City University of Hong Kong. Email: jinzhilu@cityu.edu.hk}}
\maketitle

\begin{abstract}
We investigate investors' voluntary disclosure decisions under uncertainty about their information endowment (\citealt{dye1985disclosure}). In our model, an investor may receive initial evidence about a target firm. Conditional on learning the initial evidence, the investor may receive additional evidence that helps interpret the initial evidence. The investor takes a position in the firm's stock, then voluntarily discloses some or all of their findings, and finally closes their position after the disclosure. We present two main findings. First, the investor will always disclose the initial evidence, even though the market is uncertain about whether the investor possesses such evidence. Second, the investor's disclosure strategy of the additional evidence increases stock price volatility: they disclose extreme news and withhold moderate news. Due to the withholding of the additional evidence, misleading disclosure arises as an equilibrium outcome, where the investor's report decreases (increases) price despite their news being good (bad). These results remain robust when considering the target firm’s endogenous response to the investor's report. 
\end{abstract}

\setlength{\baselineskip}{24pt} \setlength{\parindent}{15pt} 
\noindent Keywords: verifiable disclosure, unraveling, multi-dimensional information\\
\noindent JEL Classifications: D82, D83, G14, M41 \newpage

\section{Introduction}
In recent years, there has been a rise in investors voluntarily disclosing their information to the market. Many investors share their research and analysis on crowdsourced financial information platforms such as Yahoo! Finance and Seeking Alpha (\citealt{dyer2021anonymous} and \citealt{farrell2022democratization}). Furthermore, an increasing number of activist short sellers publicly disseminate their research reports on target firms (\citealt{ljungqvist2016constraining} and \citealt{brendel2021responding}). Despite the growing empirical interest in investors' disclosure, there is little theoretical guidance on their disclosure strategies. Prior literature on voluntary disclosure has focused almost exclusively on the disclosure strategies of corporate managers (e.g., \citealt{dye1985disclosure} and \citealt{verrecchia1983discretionary}). In this paper, we study investors' communication to the capital market by highlighting one major difference of the disclosure by managers and investors. All managerial incentive systems are designed to better align the interests of managers with those of the shareholders. As a result, managers often have long exposure to the firm's stock prices and thus aim to maximize stock prices (\citealt{beyer2010financial}). In contrast, investors can flexibly choose their positions on target firms before making the disclosure. Therefore, we examine investors’ disclosure strategies by augmenting \cite{dye1985disclosure} with a key departure that the sender (investors) can choose a position before disclosure. 

In our model, an investor learns an initial signal about a target firm with some probability. Conditional on learning the initial signal, the investor learns an additional signal with some probability. The investor first takes a position in the firm's stock, then voluntarily discloses some or all of their information, and finally closes their position after the disclosure. To illustrate the idea, suppose an investor has identified a red flag at a coffee chain: a sample of customer receipts shows that items per order has recently dropped. This could be negative news for the company if the demand of existing customers is declining. However, the news could be positive if the company is attracting new customers who may place smaller orders. Conversely, an increase in items per order, a ``green flag", could be positive news if existing customers are ordering more, but could be negative news if the company is losing less loyal customers. The investor may uncover additional evidence that helps better interpret the value implications of the initial information. Thus, the investor's two signals are complements, as in \cite{cheynel2020publicMosaic}.

Our first finding establishes that the investor informed about the initial evidence always discloses it despite the market's uncertainty about whether they possess such evidence. This result contrasts with prior work such as \cite{dye1985disclosure}. In \cite{dye1985disclosure}, the manager with long exposure to the stock price withholds their news if its revelation lowers the stock price. In our model, the investor can trade in both directions before disclosing and thus profit from price movements in either direction. In other words, the investor's objective is to maximize \textit{price volatility} through their disclosure. In contrast, if the investor withholds the initial evidence, the price will remain unchanged, and their profit will be zero. Therefore, the investor will never remain silent.

Our second result shows that the investor discloses the additional evidence if and only if it induces sufficiently large price volatility: they disclose extreme news and withhold moderate news. In the previous example, first consider the case where the initial signal is a red flag and is bad news on average. Thus, the disclosure of the initial signal alone generates a negative price impact. If the additional evidence strongly indicates that the coffee company is facing a reduction in customer demand, its disclosure reinforces the initial negative price impact. But if the additional evidence shows convincingly that the company is attracting new customers, then its disclosure reverses the initial negative price impact and leads to a sufficiently positive price impact.\footnote{This is supported by anecdotal evidence. For example, in \href{https://www.muddywatersresearch.com/research/bol/complexity-creating-arbitrage/}{Muddy Waters Research's report on Bollore}, the famous short-seller first points out a red flag on Bollore, a ``horrifically complex corporate structure". However, in the same report, the short-seller provided detailed evidence and reasons for why the opacity of Bollore is actually the reason to take a long position in the company.} In both cases, the disclosure of the additional evidence increases price volatility compared with the disclosure of the initial signal alone. However, when the additional evidence is moderate, its disclosure only offsets the initial negative price impact, resulting in a marginal change in price volatility. Therefore, the investor will withhold the additional evidence. 

Similar intuitions apply when the investor's initial signal is a green flag. If the additional evidence shows convincingly that the news is very good (e.g., existing customers are placing bigger orders) or very bad (e.g., the company is losing less loyal customers), the investor will disclose the additional evidence along with the green flag. When the additional evidence is moderate, the investor will withhold it.

We show that the withholding of the additional evidence results in misleading disclosure. Specifically, when the disclosure of additional evidence reverses the impact of the initial evidence by only a small magnitude, the investor withholds it and issues a report with only the initial evidence. The market is misled by such a report, as the price would move in the opposite direction if the investor disclosed the additional evidence. This finding implies that both activist short-sellers and long investors can issue misleading disclosures. It is consistent with partial and misleading disclosure being one of the common complaints from target firms about activist short-sellers. In fact, \cite{brendel2021responding} documents that 12\% of target firms in their sample respond to activist short sellers by providing additional disclosure.\footnote{For example, in 2011, short-seller Citron Research published \href{https://cdn.gmtresearch.com/public-ckfinder/Short sellers/Citron\%20Research/Citron\%20Research_QIHOO_Citron\%20Reports\%20on\%20QIHOO\%20360_Nov1-2011.pdf}{a short selling report} on the Chinese internet company Qihoo 360. Citron's report pointed out several red flags for Qihoo 360's user number. However, Qihoo's \href{https://www.prnewswire.com/news-releases/qihoo-360-responds-to-citron-research-report-133030163.html}{response} clarifies that Citron's analysis on their user number is incomplete and misleading.}

We then conduct comparative statics to explore the model's empirical predictions and show how the endogenous nature of the investor's disclosure interacts with those predictions. We focus on two parameters, the investor's competence modeled as the probability that the investor receives the additional information, and the firm's information environment modeled as the probability that the firm's fundamental is revealed by sources other than the investor. We show that a more competent investor is less likely to withhold information and engage in misleading disclosure, but the price reaction to their publication of a simple report (the initial signal alone without the additional evidence) is weaker after controlling for the report's content. The last prediction, which may disagree with one's intuition that market response to a report increases with the investor's competence, is driven by the investor's endogenous disclosure. Since a more competent investor is more likely to discover additional evidence, the market, upon observing a simple report from such an investor, ascribes a larger probability that they have withheld the additional evidence and responds with a lower price reaction. 

When the firm's information environment improves, the price reacts more strongly to the investor’s simple report. Moreover, the investor becomes less likely to engage in misleading disclosure and withholds more additional evidence that aligns with the direction of the initial evidence. The intuition is as follows: Information from other sources confronts the investor's misleading disclosure, resulting in a loss to their position. Therefore, the firm’s information environment plays a disciplinary role on misleading disclosure.  As misleading disclosure decreases, the market reacts more strongly to a simple report, as such a report becomes more convincing on average.  Anticipating this change in market beliefs, the investor withholds more additional evidence that aligns with the direction of the initial evidence, to benefit from the higher price volatility induced by a simple report. 

Finally, we extend the baseline model to consider the target firm’s endogenous response to the investor's report. Our findings show that full disclosure of the initial evidence remains an equilibrium. Furthermore, consistent with the baseline model, the investor discloses extreme additional evidence and withholds moderate additional evidence.

\subsection{Related Literature}

We contribute to the vast literature on verifiable disclosure.\footnote{See, for example, \cite{verrecchia1983discretionary}, \cite{dye1985disclosure}, \cite{jung1988disclosure}, and \cite{shin1994news}. Readers are referred to a number of excellent reviews of this vast literature, including \cite{verrecchia2001JAEessays}, \cite{dye01jae}, \cite{beyer2010financial}, \cite{StockenSurveyDisclsoure} and \cite{bertomeuCheynelSuvery2015COC}.} We extend \cite{dye1985disclosure} by incorporating
two salient features: investors' flexible choice of initial position and multi-dimensional information. These departures from \cite{dye1985disclosure}
drive our main results, namely that the investor always discloses the
initial evidence and that the disclosure of additional information is two-tailed (i.e., the investor discloses extreme news and withholds moderate news). A similar two-tailed disclosure strategy also arises in  \cite{beyer2012voluntary} and \cite{kumar2012voluntary}, but due to different mechanisms.\footnote{\cite{beyer2012voluntary} introduce voluntary disclosure into the model of \cite{myers1984corporate}. In their model, the manager's costly signaling
	via over-investment drives the endogenous cost of disclosure, and this cost
	is highest for intermediate types. As a result, the types in the middle
	strategically withhold their information. In \cite{kumar2012voluntary}, the firm’s manager cares about both the price and the value of the firm. They show that the disclosure of good news increases both short-term prices and investment efficiency. Furthermore, the manager discloses extremely bad news because investment efficiency gains from accurate capital allocation outweigh the short-term negative price impact. For intermediate news, managers remain silent as the investment distortions are less severe and do not justify the adverse price reaction.} 
	
Our paper is related to \cite {cheynel2020publicMosaic}, which introduces an information mosaic to model multi-dimensional information where the fundamental is the multiplication of
two components. We have borrowed their information structure to study a
different problem. In our model, the investor has information about both
components and decides how to disclose them. Also related are the
studies where the manager's reporting objective (exposure to the stock
price) is uncertain but exogenous (\citealt{einhorn07jae} and \citealt{fischer2000reporting}). In contrast, the investor endogenously chooses their exposure in our model. 

We have adopted the verifiable disclosure approach to study investors' disclosure. Corporate communication in practice ranges from easily verifiable content to potentially fraudulent reports to sometimes even completely unverifiable information. Accordingly, the corporate disclosure literature has used different approaches ranging from verifiable disclosure (e.g., \citealt{dye1985disclosure}) to costly misreporting (e.g., \citealt{gao2013conservatism}) to cheap talks (e.g., \citealt{stocken2000credibility}). Similarly, investors' communications to capital markets are bound to be diverse as well. At least some communication by investors to the capital market fits the verifiable disclosure framework.\footnote{For example, reports by activist short sellers are often very detailed, providing ``smoking gun'' evidence such as audio and video clips (\citealt{ljungqvist2016constraining}). Additionally, investors' reports are frequently scrutinized in practice. Activist short sellers' reports, for instance, can be challenged by target firms and their significant stakeholders (\citealt{brendel2021responding}). Articles on Seeking Alpha undergo thorough editorial review before publication (\citealt{farrell2022democratization}) and are monitored by regulators (\citealt{dyer2021anonymous}).  Finally, \cite{ljungqvist2016constraining} document that other investors, particularly long investors in target companies, respond strongly to activist short sellers’ reports, suggesting these reports are credible.}

Therefore, our model complements the existing theories that have examined investors’ disclosure based on the unverifiable disclosure approach (e.g., \citealt{benabou1992using}, \citealt{van2003rumors}, \citealt{marinovic2013forecasters}, and \citealt{schmidt2020stock}). These models typically feature multiple equilibria, including a babbling equilibrium with no communication.\footnote{The investor's credibility is modeled by assuming an honest type who always tells the truth (\citealt{benabou1992using}) or by using mechanisms such as varying trading horizons (\citealt{schmidt2020stock}).}  In contrast, our model adopts the verifiable disclosure approach, resulting in a unique equilibrium where misleading disclosure can arise as an equilibrium outcome. We believe that both approaches can contribute to our understanding of investors' disclosure.

Our paper also contributes to our understanding of investors' disclosure and how
their information is incorporated into stock prices. Prior literature has
focused on the channel through which investors' information indirectly
impacts stock prices via their trading activities (e.g., \citealt{kyle85e}, 
\citealt{glosten1985bid}, \citealt{goldstein2008manipulation}, and \citealt{gao2025manipulation}). In contrast, our model focuses on the disclosure channel. We show that misleading disclosure can happen even though everyone is rational and no one is systematically fooled. Our results are useful to help guide the growing empirical literature on investors' disclosure (see \citealt{ljungqvist2016constraining}, \citealt{zhao2020activist}, \citealt{paugam2020deploying}, \citealt{brendel2021responding}, \citealt{dyer2021anonymous}, \citealt{farrell2022democratization}, and \citealt{mcdonough2024voluntary}). We discuss the empirical implications in detail in Section \ref{Section: cs}.

Finally, there is a sizeable literature on disclosure with multiple private signals (e.g., \citealt{dyeFinn2007equilibrium}, \citealt{arya2010discretionary}, and \citealt{ebertstecher2017discretionary}). Our study is related to \cite{pae2005selective} and \cite{beyer2023disclosure}. Both studies show that managers may engage in selective and misleading disclosure. In their models, the manager maximizes the price and therefore, withholds signals with low realizations. In contrast, the investor in our model adopts a very different disclosure strategy due to their flexible choice of initial position.\footnote{The feature of multiple signals also manifests in those dynamic settings when the sender receives multiple signals over time (e.g., \citealt{einhorn2008intertemporal}, \citealt{bertomeu2011low}, \citealt{guttman2014not}, \citealt{bertomeu2020often}, \citealt{bertomeu2022dynamics}). And yet another variant takes the form that the additional signal is about the precision of the initial signal (e.g., \citealt{hughes2004voluntary}, \citealt{bagnoli2007financial}) or comes from other sources (e.g., \citealt{einhorn2005nature} and \citealt{einhorn2018competing}).}

The rest of the paper is organized as follows. We present the model in Section \ref{Section: Model}, analyze the equilibrium in Section \ref{Section: Equilibrium}, and develop the empirical implications in Section \ref{Section: cs}. Section \ref{Section: Extension} studies the extension with the firm's endogenous response. Section \ref{Section: Conclusion} concludes. 

\section{Model\label{Section: Model}}

Our model augments \cite{dye1985disclosure} with trading. An investor
may receive two signals about the target firm. The investor trades in the
firm's stock, makes disclosure decisions, and closes the position afterward.

\noindent \textbf{Information Structure:} A firm's value $\theta$ has two
components: 
\begin{equation}
\theta =rx.
\end{equation}
$r$ is a continuous variable on $\mathbb{R^{+}}$, with density function $f(.)$ and mean $r_0$. $x$ is a continuous variable on $\mathbb{R}$, with density function $g(.)$ and mean $\mu_{0}$.  We assume $\mu_0>0$ without loss of generality. 

An investor may receive information about both $r$ and $x$. They learn $r$ with probability $\alpha$ and, conditional on learning $r$, they further learns about $x$ with probability $\beta$. To ease exposition, we say the investor is \textit{uninformed} if they learn no information, \textit{partially informed }if they learn only $r$, and \textit{informed} if they learn both $r$ and $x$. 

We interpret $r$ as the initial information and $x$ as additional evidence that helps interpret the initial evidence. For instance, $r$ could represent a decrease in items per order, while $x$ could represent information about whether this decline is due to decreased demand from existing customers or the company attracting new customers who place smaller orders.

\noindent \textbf{Trading:} The investor trades twice, establishing a
position before the disclosure and closing it after the disclosure. The size
of their order $\rho$ is constrained to be $\rho \in [-1,1]$, where $\rho <0$ stands for a short position, $\rho >0$ stands for a long position, and $0$ stands for no trade. 

\noindent \textbf{The investor's disclosure:} Between the two rounds of trading, the investor makes their disclosure decision. As in \cite{dye1985disclosure} and \cite{jung1988disclosure}, any disclosure is assumed truthful. The informed type can disclose both signals, or imitate the partially informed type by withholding $x$, or imitate the uninformed type by withholding both $r$ and $x$. The partially informed type can disclose $r$, or imitate the uninformed type by withholding $r$. The unformed type cannot disclose anything or prove that they are uninformed. We say the investor's report is simple if they only disclose $r$ and is elaborate if they disclose both $r$ and $x$.\footnote{The investor's position (long or short) is not necessarily disclosed in the report. Our results remain the same regardless of whether the position is disclosed, as the market can rationally infer it from the investor's report.}

\noindent \textbf{Disclosure of $\theta$ by other information sources:}
After the investor's disclosure, we assume that $\theta$ is revealed with probability $q$. This feature is motivated by the idea that other market participants, including media, analysts, other investors, and the target firm itself, may also disclose relevant information that confronts any omission or withholding of information in the investor's disclosure.\footnote{For example, \cite{brendel2021responding} documents that 31\% of the target firms in their sample respond to activist short sellers' reports, and 12\% of them respond by providing additional disclosure. Target firms may not respond for many reasons. For example, it takes time to collect the necessary evidence. Additionally, responses may reveal proprietary information and incur legal liabilities (\citealt{brendel2021responding}).} Thus, we can view $q$ as a proxy for the firm's information environment.

The timeline of the model is as follows:
\begin{itemize}
\item At date 1, the investor learns $r$ with probability $\alpha$. Conditional on learning $r$, the investor learns $x$ with probability $\beta$. After this information event, the investor establishes an initial position $\rho $ at price $P_{1}=E[\theta]=r_0\mu_0$.

\item At date 2, the investor makes the disclosure decisions. After the investor's disclosure, the firm value $\theta$ is revealed with probability $q$.

\item At date 3, the investor closes their initial position at the updated stock
price $P_{2}$.
\end{itemize}

We define misleading disclosure as situations where the stock price moves in the opposite direction of the investor's information.
\begin{definition}[Misleading Disclosure]\label{def:misleading}
	The informed investor's simple report is misleading if it decreases (increases) the price when $rx>r_0\mu_0$ ($rx< r_0\mu _{0}$).
\end{definition}
Note that when the investor engages in misleading disclosure, they also trade against their information, as they take a short (long) position despite having good (bad) news.

The equilibrium concept we adopt is Grossman–Perry–Farrell equilibrium (GPFE), as defined in \cite{bertomeu2018verifiable}. This concept is a refinement of the perfect Bayesian equilibrium (PBE) and is used to determine the market's belief upon potential off-equilibrium disclosures. A detailed definition is provided in the proof of Proposition \ref{THM: r-disclosure_continuous} in the appendix.

\section{Equilibrium \label{Section: Equilibrium}}
\subsection{Preliminary Analysis}
We first pin down the investor's disclosure strategy of the initial evidence $r$.
\begin{proposition}
	\label{THM: r-disclosure_continuous} The informed investor and the partially informed investor always disclose the initial evidence $r$, regardless of its realization.
\end{proposition}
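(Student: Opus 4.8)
The plan is to reduce the investor's problem to maximizing expected price \emph{volatility} and then compare disclosing $r$ with withholding it, type by type. Since the investor establishes $\rho\in[-1,1]$ at the common price $P_1=r_0\mu_0$ and unwinds at $P_2$, the expected trading profit for an investor with information $I$ who plans disclosure $d$ is $\rho\,(E[P_2\mid I,d]-P_1)$, which is maximized by setting $\rho=\operatorname{sign}(E[P_2\mid I,d]-P_1)$ and equals $\lvert E[P_2\mid I,d]-P_1\rvert$. Writing $E[P_2\mid I,d]=q\,E[\theta\mid I]+(1-q)\,E[\theta\mid d]$ --- the first term from the probability-$q$ event that $\theta$ is exogenously revealed, the second from the market's posterior when it is not --- the objective becomes the absolute expected deviation of $P_2$ from $P_1$. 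First I would note that, in the candidate equilibrium where every informed and partially informed type discloses $r$, nondisclosure identifies the uninformed type, so the nondisclosure price equals the prior $P_1$.

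For the informed type, the argument is clean and forms the backbone. An elaborate report (disclosing both signals) makes $P_2=\theta$ whether or not $\theta$ is exogenously revealed, so its deviation is $\theta-P_1$ and its payoff is $\lvert\theta-P_1\rvert$. Withholding both signals instead captures the truth only through the revelation channel, giving deviation $q(\theta-P_1)$ and payoff $q\lvert\theta-P_1\rvert$. Because $q\le 1$, the elaborate report weakly dominates full withholding, strictly so whenever $q<1$ and $\theta\neq P_1$. Since both the elaborate and the simple report disclose $r$, the informed type's optimal choice necessarily discloses $r$, regardless of how it later resolves the elaborate-versus-simple trade-off.

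The partially informed type is the crux, since it cannot issue an elaborate report and the comparison is only between the simple report and withholding. Withholding yields deviation $\Delta_{\mathrm{withhold}}(r)=q\mu_0(r-r_0)$, coming purely from the revelation channel. Letting $\hat\mu_s(r)$ denote the market's posterior mean of $x$ after a simple report with evidence $r$, disclosing yields $\Delta_{\mathrm{simple}}(r)=q\mu_0(r-r_0)+(1-q)\bigl(r\,\hat\mu_s(r)-r_0\mu_0\bigr)$. Thus the claim reduces to showing that the extra term never perversely shrinks the deviation, i.e.\ $\lvert\Delta_{\mathrm{simple}}(r)\rvert\ge\lvert\Delta_{\mathrm{withhold}}(r)\rvert$ for every $r$. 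A clean sufficient condition is that the simple-report non-revealed price $r\,\hat\mu_s(r)$ lie on the same side of $P_1$ as the investor's own expectation $r\mu_0$, namely $\operatorname{sign}\bigl(r\,\hat\mu_s(r)-r_0\mu_0\bigr)=\operatorname{sign}(r-r_0)$; then the two deviations are comonotone and their magnitudes add.

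The main obstacle is exactly establishing this comonotonicity, because $\hat\mu_s(r)$ is endogenous and in general differs from $\mu_0$: the simple-report pool mixes partially informed senders (whose $x$ is distributed as the prior) with informed senders who strategically withhold moderate $x$. I would therefore pin down $\hat\mu_s(r)$ through its fixed-point relation with the informed type's $x$-withholding region (the object characterized in Proposition \ref{THM: r-disclosure_continuous} and its successor) and argue that the resulting price $r\,\hat\mu_s(r)$ is monotone in $r$ and crosses $P_1$ at $r=r_0$, which delivers the sign condition; the residual overshoot case $\lvert(1-q)(r\hat\mu_s(r)-r_0\mu_0)\rvert\ge 2\lvert q\mu_0(r-r_0)\rvert$ only reinforces disclosure. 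Off-equilibrium reports are disciplined by the GPFE refinement, whose credible-deviation beliefs I would invoke to rule out any remaining profitable deviation. I expect this coupling of the statement with the $x$-disclosure characterization to be the delicate step.
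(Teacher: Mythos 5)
Your verification that full disclosure of $r$ is \emph{an} equilibrium is essentially sound and matches the paper's logic for that half of the claim: under candidate beliefs where nondisclosure identifies the uninformed type, the elaborate report weakly dominates total withholding for the informed type ($|\theta-r_0\mu_0|$ versus $q|\theta-r_0\mu_0|$), and for the partially informed type the simple report's deviation $q\mu_0(r-r_0)+(1-q)(r\mu_{ND}-r_0\mu_0)$ exceeds the withholding deviation $q\mu_0(r-r_0)$ in magnitude because the two terms share a sign. That sign condition is available, but not by the route you sketch: the paper's Figure \ref{fig:r} shows the nondisclosure price $r\mu_{ND}$ need not be monotone in $r$ (panel (d)), so ``monotone and crosses $P_1$ at $r_0$'' is the wrong lever. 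What the proof of Proposition \ref{x disclosure} actually delivers is that the fixed point satisfies $\mu_{ND}=\underline{x}\in(\mu_0,\frac{r_0\mu_0}{r})$ for $r<r_0$ and $\mu_{ND}=\bar{x}\in(\frac{r_0\mu_0}{r},\mu_0)$ for $r>r_0$, which places $r\mu_{ND}$ on the correct side of $r_0\mu_0$ directly.

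The genuine gap is that the paper proves this proposition as a uniqueness statement: no GPFE admits a positive measure of types withholding $r$. Your argument never engages with a candidate equilibrium in which withholding occurs. The paper supposes a non-null withholding set $W_r$, computes the nondisclosure price $\theta_{ND}$ by Bayes' rule as a convex combination of the prior and the withheld types' conditional mean, and exhibits a withheld type $(r_s,x_s)$ with $r_sx_s$ strictly more extreme than $\theta_{ND}$ who profits by deviating to the elaborate report; this extreme-type deviation is the engine of the contradiction and has no counterpart in your proposal. Moreover, the hard case is when the partially informed type withholds $r$ (the set $W_b$): there the message $(r,\emptyset)$ is off-equilibrium, Bayes' rule is silent, and the paper must construct a self-signaling set and invoke neologism-proofness to show that every term of the resulting indifference condition \eqref{indiffc} is negative, forcing $W_b=\emptyset$. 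Your one-sentence deferral to ``the GPFE refinement'' is precisely where the bulk of the paper's proof lives, so as written the proposal establishes existence of the full-disclosure equilibrium but not the proposition.
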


Proposition \ref{THM: r-disclosure_continuous} shows that the investor always discloses the initial evidence even though the market is uncertain about whether they have received it or not. This unraveling result is different from the partial disclosure result found in \cite{dye1985disclosure} and \cite{jung1988disclosure} because the investor in our setting can trade in both directions before the disclosure. The sender (manager) in \cite{dye1985disclosure} has long exposure to the stock price and benefits only from positive price movement. When the news is negative and its disclosure moves the stock price downward, the manager has incentives to withhold such information. In contrast, the trading flexibility of the sender (investor) in our model enables him to profit from any price movement, regardless of its direction. They can make a profit by taking a short (long) position if their report decreases (increases) the price. Thus, the investor tries to move the price as much as possible, regardless of the direction. Anticipating this incentive, if the investor does not disclose any report, the market infers that the investor must be uninformed about $r$, and as a result, the price does not move. Thus, the uninformed investor does not trade, and the investor always disclosing $r$ is an equilibrium. 

The proof of the equilibrium's uniqueness requires that we determine the market's belief upon the off-equilibrium disclosure of $r$. In the appendix, we apply neologism-proofness as in \cite{bertomeu2018verifiable} to accomplish the task. Readers can also refer to Lemma \ref{THM: r-disclosure_single} in the appendix for the unraveling result in the case of a single signal. 

Moreover, when the investor discloses $r$ and withholds $x$, the potential revelation of the firm's value by other information sources will dilute their profit. However, this does not prevent the investor from disclosing $r$, as withholding it would result in zero profit. If the investor expects to incur negative profit from withholding $x$, they can always switch to full disclosure of both $r$ and $x$ and take positions accordingly, thereby guaranteeing positive profit. 

\subsection{Main Results}
Having established that the investor will always disclose $r$, we turn to the investor's disclosure of the additional evidence $x$ and the trading decision $\rho$. From this point forward, we refer to the disclosure of $x$ whenever discussing disclosure, and ``investor" will mean the informed investor unless stated otherwise.

Recall that due to the flexible choice of initial position, the investor can make a profit as long as their disclosure moves the stock price away from the prior ($r_0\mu_0$). In other words, the investor tries to increase \textit{price volatility} through their disclosure. We use $\mu_{ND}$ to denote the market's posterior expectation about the additional information $x$ upon observing only the initial signal $r$. Thus, the price volatility of a simple and an elaborate report, which we denote by $\pi_q(x)$ and $\pi(x)$, are given by:\footnote{Note that at date 3, the firm value is revealed by other information sources with probability $q$. Thus, when withholding $x$ (i.e., issuing a simple report), the investor expects the date 3 price to be $qrx+(1-q)r\mu_{ND}$.}
\begin{align}
	&\pi_q(x)=|qrx+(1-q)r\mu_{ND}-r_0\mu_0|, \label{piq}\\
	&\pi(x)=|rx-r_0\mu_0|. \label{pix}
\end{align}
The next lemma characterizes the investor's objective when making the disclosure decision. 
\begin{lemma}\label{volatility}
The informed investor withholds $x$ if and only if $\pi_q(x)\ge\pi(x)$.
\end{lemma}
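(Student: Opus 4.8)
The plan is to reduce the informed investor's disclosure decision to a comparison of two expected trading profits and to identify these profits with $\pi_q(x)$ and $\pi(x)$. The investor establishes the position $\rho\in[-1,1]$ at date 1 at the price $P_1=r_0\mu_0$ and closes it at the date-3 price $P_2$. Because $\rho$ is fixed before both the disclosure and the possible exogenous revelation of $\theta$, the investor's expected profit from any chosen report is $\rho\,(E[P_2]-P_1)$, which is linear in $\rho$. Hence the optimum is attained at an endpoint, $\rho=\pm1$, with the sign matched to that of $E[P_2]-P_1$, so the maximized expected profit equals $|E[P_2]-P_1|$ (and equals zero when $E[P_2]=P_1$).

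First I would compute $E[P_2]$ under each disclosure, holding fixed the conjectured market posterior $\mu_{ND}$ about $x$ following a simple report. For an elaborate report both $r$ and $x$ are disclosed and verified, so the market learns $\theta=rx$ exactly; the potential revelation by other sources then adds nothing and $P_2=rx$ with certainty, giving a maximized profit of $|rx-r_0\mu_0|=\pi(x)$. For a simple report the market observes only $r$: with probability $q$ the fundamental is revealed and $P_2=rx$, while with probability $1-q$ it is not and $P_2=r\mu_{ND}$. Taking expectations yields $E[P_2]=q\,rx+(1-q)\,r\mu_{ND}$, so the maximized profit is $|q\,rx+(1-q)\,r\mu_{ND}-r_0\mu_0|=\pi_q(x)$.

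Finally, since the investor always discloses $r$ by Proposition \ref{THM: r-disclosure_continuous}, the only remaining choice is whether to append $x$. Comparing the two maximized profits, withholding $x$ is (weakly) optimal exactly when the simple-report profit dominates, i.e.\ when $\pi_q(x)\ge\pi(x)$, which is the claim, with ties resolved in favor of withholding.

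The step I expect to require the most care is the reduction to $|E[P_2]-P_1|$: one must verify that $\rho$ is genuinely committed before any information about the eventual price movement is resolved, so that the profit is linear in $\rho$ with no option value attaching to the sign choice, and that $P_1$ is pinned at the prior mean $r_0\mu_0$ independently of the report (the date-1 trade being uninformative to the price). The evaluation of $E[P_2]$ is then routine once one observes that full disclosure of $(r,x)$ renders the revelation probability $q$ irrelevant, whereas under a simple report $q$ governs the mixture between the true value $rx$ and the conjectured price $r\mu_{ND}$.
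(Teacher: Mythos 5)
Your proof is correct and takes essentially the same approach as the paper, which treats the lemma as immediate from the linearity of trading profit $\rho\,(E[P_2]-P_1)$ in the committed position $\rho\in[-1,1]$, so that the maximized profit under each report equals the corresponding price volatility $|E[P_2]-r_0\mu_0|$. Your identification of $E[P_2]$ with $rx$ under an elaborate report and with $q\,rx+(1-q)\,r\mu_{ND}$ under a simple report matches the paper's computation exactly.
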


Lemma \ref{volatility} shows that when determining the investor's disclosure strategy, we can put aside the investor's trading decisions and focus on price volatility. Given that $\mu_{ND}$ is a constant in equilibrium, it is straightforward that $\pi_q(x)\ge\pi(x)$ if and only if the firm value (i.e., $rx$) is sufficiently close to its prior. This implies that, fixing the initial evidence $r$, the investor withholds $x$ if and only if it is in the middle. In the appendix, we show that the equilibrium exists and is uniquely determined by two disclosure thresholds $\bar{x}$ and $\underline{x}$. Thus, we have our main result below.
\begin{proposition} \label{x disclosure}
For any realization of $r$, the informed investor's unique equilibrium disclosure and trading
decisions are characterized by three regions:
	\begin{enumerate}
		\item	For $x\in (-\infty,\underline{x})$, the investor takes a short position and discloses $x$.
	
		\item   For $x\in (\bar{x},+\infty)$, the investor takes a long position and discloses $x$.
		
		\item	For $x\in [\underline{x},\bar{x}]$, the investor withholds $x$ and takes a short (long) position when $r<r_0$ ($r>r_0$).
	\end{enumerate}
The partially informed investor takes a short (long) position when $r<r_0$ ($r>r_0$). When $r<r_0$, $\bar{x}$ and $\underline{x}$ are uniquely determined by equations \eqref{shortindiff1} and \eqref{shortindiff2}. When $r>r_0$,  $\bar{x}$ and $\underline{x}$ are uniquely determined by equations \eqref{longindiff1} and \eqref{longindiff2}. 
\end{proposition}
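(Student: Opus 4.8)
The plan is to build on Lemma~\ref{volatility}, which lets me replace the joint position-and-disclosure problem with the single objective of maximizing price volatility: the informed investor withholds $x$ exactly when $\pi_q(x)\ge\pi(x)$. Fixing the realization of $r$ and treating the equilibrium value $\mu_{ND}$ as a constant, I would write $A\equiv rx-r_0\mu_0$ and $C\equiv r\mu_{ND}-r_0\mu_0$, so that $\pi(x)=|A|$ and $\pi_q(x)=|qA+(1-q)C|$, with $A$ strictly increasing in $x$ because $r>0$. The withholding set is then $\{x:|qA+(1-q)C|\ge|A|\}$. Squaring both sides removes the absolute values and produces a quadratic inequality in $A$ whose leading coefficient is $-(1-q^2)<0$; hence the solution set is a closed interval $[A_1,A_2]$, and pulling back through the affine map $x\mapsto A$ yields the withholding interval $[\underline{x},\bar{x}]$. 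Because the discriminant works out so that $A_1\le 0\le A_2$, the point $A=0$ (i.e.\ $x=r_0\mu_0/r$) lies inside the interval, which is what separates the three regions of the statement.

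Next I would pin down the trading direction in each region. On an elaborate report both signals are revealed, so $P_2=rx$ and the expected gain per unit is exactly $A$; since $A>0$ for $x>\bar{x}$ and $A<0$ for $x<\underline{x}$, the investor is long above $\bar{x}$ and short below $\underline{x}$, matching regions~1 and~2. On a simple report the expected date-3 gain per unit is $B\equiv qA+(1-q)C$; evaluating $B$ at the two endpoints of $[A_1,A_2]$ shows it keeps the sign of $C$ throughout the interval (being affine and monotone in $A$, it never crosses zero inside), so the withholding investor trades in the direction of $\operatorname{sign}(r\mu_{ND}-r_0\mu_0)$. For the partially informed type, whose own conditional mean of $x$ is $\mu_0$, the expected per-unit gain is $qr\mu_0+(1-q)r\mu_{ND}-r_0\mu_0$. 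The remaining task is to show both of these sign conditions flip exactly at $r=r_0$; this reduces to establishing the equilibrium value of $\mu_{ND}$.

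That last point is where I expect the real work to lie, and it is handled by closing the fixed point between the thresholds and the market belief. The thresholds $\underline{x},\bar{x}$ are determined by the indifference conditions $\pi_q=\pi$ at the two endpoints, which depend on $\mu_{ND}$; conversely, Bayes' rule sets $\mu_{ND}$ equal to the posterior mean of $x$ over the pool generating a simple report, namely the partially informed type together with the informed type who withholds $x\in[\underline{x},\bar{x}]$. Substituting the endpoint conditions into this Bayesian consistency requirement produces the systems referenced in the statement (one for $r<r_0$, one for $r>r_0$). I would prove that each system has a unique solution by a monotonicity/intermediate-value argument: as a candidate $\mu_{ND}$ increases, the induced interval $[\underline{x},\bar{x}]$ shifts its posterior mean in a controlled, monotone way, so the consistency map has a single crossing. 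Uniqueness of the overall equilibrium, including the pinning of off-path beliefs after an unexpected report, is then secured by the GPFE refinement invoked in the model section.

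The main obstacle is this simultaneity: the withholding region and $\mu_{ND}$ are mutually defining, so neither the interval nor the belief can be computed in isolation. Concretely, the hard steps are (i) verifying that the Bayesian-consistency map is single-crossing (or a contraction) so that existence and uniqueness of $\mu_{ND}$, and hence of $\underline{x},\bar{x}$, hold for every $r$, and (ii) confirming that the resulting equilibrium makes the trading-direction thresholds for both the withholding informed type and the partially informed type coincide at $r_0$, so that the clean ``short if $r<r_0$, long if $r>r_0$'' description is exact rather than approximate. Everything else---the interval structure, the squaring step, and the sign computations at the endpoints---is routine once $\mu_{ND}$ is fixed.
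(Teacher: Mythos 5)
Your proposal is correct and follows essentially the same route as the paper's proof: reduce to the volatility comparison of Lemma \ref{volatility}, show the withholding set is an interval containing $x=r_0\mu_0/r$ with endpoints given by the indifference conditions \eqref{lower}--\eqref{upper}, close the fixed point with Bayes' rule, and establish uniqueness by a monotone single-crossing argument (the paper's $Q$ and $R$ functions), with the sign of $r\mu_{ND}-r_0\mu_0$ pinning down the trading direction in the withholding region. Your squaring of the inequality to get the quadratic in $A$ is just a more explicit presentation of the paper's ``straightforward'' interval claim, and the two hard steps you flag are exactly the ones the paper resolves via the explicit root location $\underline{x}\in(\mu_0,\tfrac{r_0\mu_0}{r})$ and the (online-appendix) elimination of the wrong-sign case for $\mu_{ND}$.
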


Proposition \ref{x disclosure} is illustrated in Figure \ref{equilibrium2} and Figure \ref{equilibrium3}. The investor issues an elaborate report in region A and region D. They take a short position in region A and a long position in Region D. In these two regions, the additional evidence is so significant that the price volatility of an elaborate report exceeds that of a simple report. In contrast, when the additional evidence is weak (i.e., in regions B and C), disclosing it would only offset the price impact of the initial evidence. Thus, the investor pretends to be uninformed about the additional evidence, to profit from the higher price volatility induced by a simple report that contains only the initial evidence. 

In regions B and C, the investor's trading position depends on the realization of the initial evidence. To see the intuition, consider first the market's reaction to a simple report with a red flag (i.e., $r<r_0$). Note that when only the initial evidence is disclosed, the market rationally believes that the investor either failed to uncover any additional evidence or chose not to disclose it. In the former case, the belief about the second component of the firm value $x$ does not change, but the first component of the firm value is confirmed to be lower than the prior due to $r<r_0$. Overall, the firm value is revised downward from $r_0\mu_0$ (i.e., the prior) to $r\mu_{0}$.  In the latter case, the nature of the news depends on the equilibrium disclosure strategy about $x$, but the news cannot be too favorable, as otherwise the investor would have disclosed it after taking a long position. Therefore, despite the possibility of strategic withholding of $x$, the market should still react negatively to reports containing only the red flag, and rationally anticipates that the investor takes a short position when issuing a report with only the red flag. As shown in Figure \ref{equilibrium2}, at the lower disclosure threshold, the investor is indifferent between shorting plus withholding and shorting plus disclosing, so the non-disclosure price satisfies $r\mu_{ND} = r\underline{x}$.

\begin{figure}[H]
	\centering
	\includegraphics[width=1\textwidth]{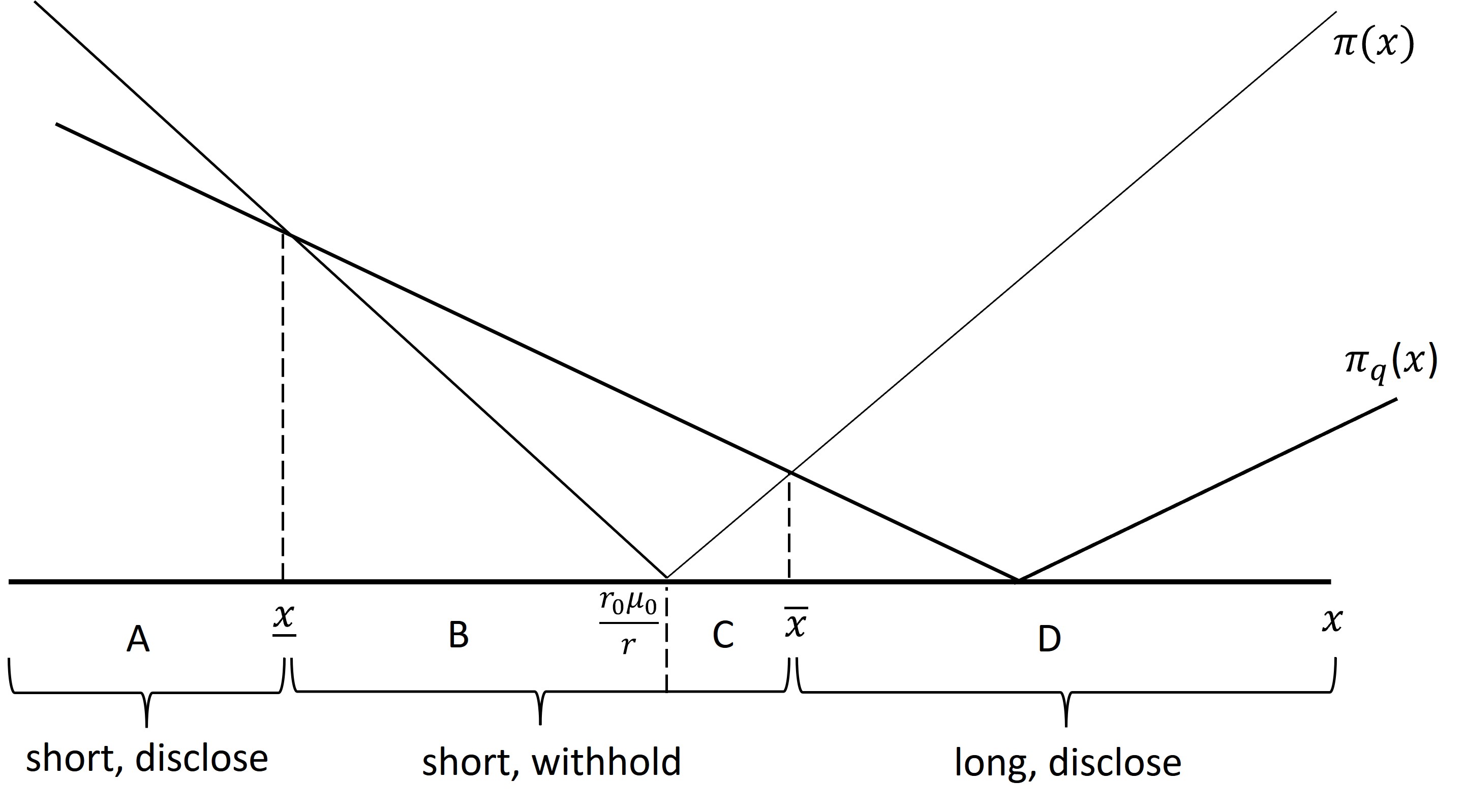}
	\caption{\textbf{The equilibrium when $r<r_0$\\Non-disclosure price $r\mu_{ND}=r\underline{x}$}}
	\label{equilibrium2}
\end{figure}

\begin{figure}[H]
	\centering
	\includegraphics[width=1\textwidth]{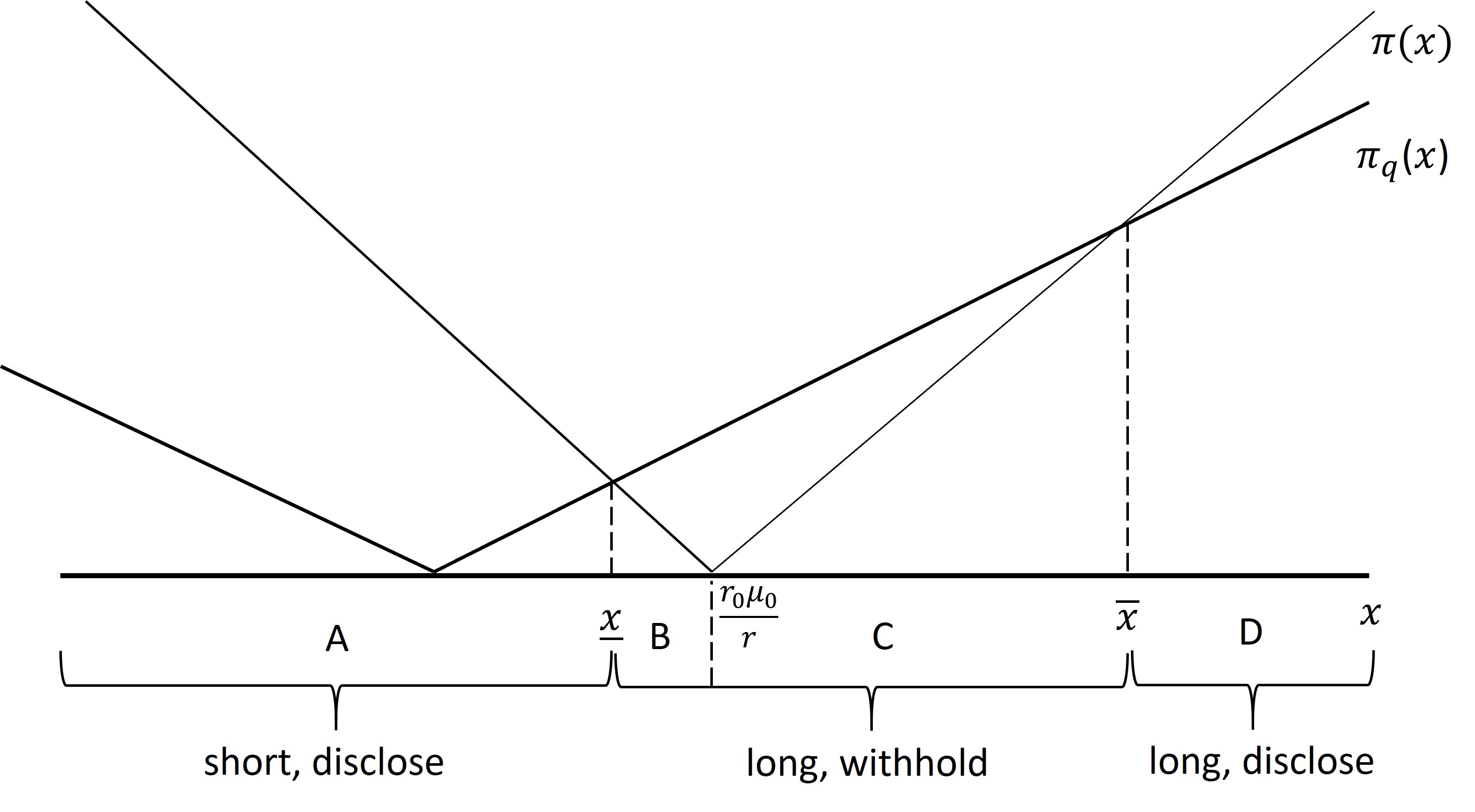}
	\caption{\textbf{The equilibrium when $r>r_0$\\Non-disclosure price  $r\mu_{ND}=r\bar{x}$}}
	\label{equilibrium3}
\end{figure}

Similarly, when the investor issues a simple report with a green flag (i.e., $r>r_0$), the market reacts positively and rationally anticipates that the investor takes a long position when issuing a report with only the green flag. As shown in Figure \ref{equilibrium3}, at the upper disclosure threshold, the investor is indifferent between buying plug withholding and buying plus disclosing. Thus, the non-disclosure price satisfies $r\mu_{ND}=r\bar{x}$.

Building on Proposition \ref{x disclosure}, we next show that misleading disclosure arises in equilibrium.
\begin{corollary} \label{misleading}
Misleading disclosure arises when $x\in[\frac{r_0\mu_0}{r},\bar{x}]$ for $r<r_0$ (region C in Figure \ref{equilibrium2}) and when $x\in [\underline{x},\frac{r_0\mu_0}{r}]$ for $r>r_0$ (region B in Figure \ref{equilibrium3}).   
\end{corollary}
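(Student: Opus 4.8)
The plan is to read the misleading region directly off the equilibrium characterization in Proposition \ref{x disclosure}, combined with Definition \ref{def:misleading}. The only genuine content beyond bookkeeping is to pin down the \emph{sign} of the price reaction to a simple report and to locate the value-neutral point $x = r_0\mu_0/r$ (where $rx=r_0\mu_0$) relative to the two disclosure thresholds. I would treat the case $r<r_0$ in detail; the case $r>r_0$ is symmetric.

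First I would record the two facts I need about the simple-report price. By Proposition \ref{x disclosure} the informed investor withholds $x$ exactly on $[\underline{x},\bar{x}]$, and on this region the market's posterior mean of $x$ is the constant $\mu_{ND}$, so the price reaction to a simple report is $r\mu_{ND}$, which equals $r\underline{x}$ when $r<r_0$ (Figure \ref{equilibrium2}). Next I would locate $r_0\mu_0/r$: for $x<\underline{x}$ the investor discloses and shorts, which is profitable only if the disclosed price $rx$ lies below the prior, so letting $x\uparrow\underline{x}$ gives $r\underline{x}\le r_0\mu_0$, i.e. $\underline{x}\le r_0\mu_0/r$; symmetrically, disclosing and going long for $x>\bar{x}$ forces $r\bar{x}\ge r_0\mu_0$, i.e. $\bar{x}\ge r_0\mu_0/r$. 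Hence $\underline{x}\le r_0\mu_0/r\le\bar{x}$, so the candidate region $[r_0\mu_0/r,\bar{x}]$ indeed sits inside the withholding region.

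The crux is to upgrade $r\underline{x}\le r_0\mu_0$ to the strict inequality $r\underline{x}<r_0\mu_0$, so that the simple report strictly \emph{lowers} the price. I would argue by contradiction: if $\underline{x}=r_0\mu_0/r$ then $\mu_{ND}=\underline{x}$ gives $r\mu_{ND}=r_0\mu_0$, and for any $x$ slightly above $\underline{x}$ one computes $\pi_q(x)=|q(rx-r_0\mu_0)|=q\,(rx-r_0\mu_0)<rx-r_0\mu_0=\pi(x)$ because $q<1$; by Lemma \ref{volatility} the investor would then disclose rather than withhold, contradicting that $[\underline{x},\bar{x}]$ is the withholding region. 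Therefore $r\underline{x}<r_0\mu_0$, and a simple report strictly decreases the price for every withheld $x\in[\underline{x},\bar{x}]$.

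Finally I would apply Definition \ref{def:misleading}. Since the simple report lowers the price, it is misleading precisely when the true value is good news, $rx>r_0\mu_0$, i.e. $x>r_0\mu_0/r$; intersecting with the withholding region $[\underline{x},\bar{x}]$ and using $\underline{x}\le r_0\mu_0/r\le\bar{x}$ yields the stated region $[r_0\mu_0/r,\bar{x}]$ (the left endpoint being the value-neutral boundary). The case $r>r_0$ runs the same way with the roles of $\underline{x}$ and $\bar{x}$ interchanged: there $\mu_{ND}=\bar{x}$, the simple report strictly \emph{raises} the price ($r\bar{x}>r_0\mu_0$, again ruling out the knife-edge via $q<1$), and misleading disclosure occurs when $rx<r_0\mu_0$, giving $[\underline{x},r_0\mu_0/r]$. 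I expect the strict-direction step to be the main obstacle, since it is the only place where one must exclude the degenerate knife-edge in which the non-disclosure price coincides with the prior.
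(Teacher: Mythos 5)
Your argument is correct and follows essentially the same route as the paper, which treats the corollary as an immediate consequence of Proposition \ref{x disclosure} and Definition \ref{def:misleading}: the simple report moves the price to $r\underline{x}<r_0\mu_0$ (resp. $r\bar{x}>r_0\mu_0$), so withholding is misleading exactly where $rx$ lies on the opposite side of the prior. The strict inequality you establish by the knife-edge contradiction is in fact already available from the paper's proof of Proposition \ref{x disclosure}, which shows $\underline{x}\in(\mu_0,\frac{r_0\mu_0}{r})$ (and symmetrically $\bar{x}\in(\frac{r_0\mu_0}{r},\mu_0)$ for $r>r_0$) directly from the signs of $Q$ and $R$ at the endpoints, so your extra step, while valid, is not needed.
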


Corollary \ref{misleading} shows that misleading disclosure arises in a sub-region of the withholding region. At point $x=\frac{r_0\mu_0}{r}$ in Figure \ref{equilibrium2}, the negative price impact of the red flag is perfectly offset by the disclosure of additional evidence. Moving from $x=\frac{r_0\mu_0}{r}$ towards the left, the additional evidence indicates bad news, and the price indeed decreases.  Moving from $x=\frac{r_0\mu_0}{r}$ towards the right (region C of Figure \ref{equilibrium2}), the additional evidence indicates good news. However, in this case, the investor reports the red flag while withholding the additional evidence, leading to a price decline. This behavior aligns with the concept of misleading disclosure as defined in Definition \ref{def:misleading}. Similarly, in region B of Figure \ref{equilibrium3}, the investor reports the green flag while concealing additional evidence that indicates bad news, which is again consistent with Definition \ref{def:misleading}.

\section{Empirical Implications \label{Section: cs}}
Having established the equilibrium in Proposition \ref{THM: r-disclosure_continuous} and Proposition \ref{x disclosure}, we derive empirical predictions of the model by conducting comparative statics analysis. 

Our predictions focus on the following dependent variables. First, recall that we define the investor's report to be simple if they only disclose $r$ and to be elaborate if they disclose both $r$ and $x$. We define the frequency of negative (positive) elaborate reports as $\Pr(x<\underline{x})$ ($\Pr(x>\bar{x})$). We define the report's overall elaborateness as $\Pr(x<\underline{x})+\Pr(x>\bar{x})$, which is the likelihood of an elaborate report conditional on the investor being informed about $x$.\footnote{An alternative definition of elaborateness is $\beta(\Pr(x<\underline{x})+\Pr(x>\bar{x}))$, which represents the unconditional likelihood of an elaborate report. All our predictions remain qualitatively unchanged under this definition.} Note that elaborateness is also a measure of the informativeness of date 3 stock price: when the investor discloses both signals, the price is equal to the true value of the firm. When the investor issues a simple report, the price is only a noisy measure of firm value. Empirically, investors’ reports are more continuous. The elaborateness of reports could be proxied by the length or the number of points. 

Second, a related variable is the extremity of reports. A report is more extreme in our model if the news is further away from the prior. The investor discloses the news in two tails and withholds in the middle. As a result, the disclosed news is more extreme than the withheld news. The implication is that as the elaborate reports become more frequent, they become less extreme as well. We define the extremity of negative (positive) elaborate reports as $\Pr(x\in[\underline{x},\frac{r_0\mu_0}{r}])$ ($\Pr(x\in[\frac{r_0\mu_0}{r},\bar{x}])$),  and overall extremity as $\Pr(x\in[\underline{x},\bar{x}])$.

Third, note that stock prices endogenously react to the investor's disclosure in equilibrium. Conditional on a simple report and no information revelation by other information sources, the magnitude of market reaction is captured by the price volatility, $\pi_{q=0}(x)=|r_0\mu_0-r\mu_{ND}|$, where $\pi_q$ is defined in equation \eqref{piq}. We have demonstrated in Proposition \ref{x disclosure} that when $r<r_0$ ($r>r_0$), $\pi_{q=0}(x)=r_0\mu_0-r\underline{x}$ ($\pi_{q=0}(x)=r\bar{x}-r_0\mu_0$).

Fourth, our model also generates predictions about misleading disclosure. Misleading disclosure is not directly observable in general, and we leave it to empirical research to find the proper proxy.\footnote{For an indirect proxy of misleading disclosure, one may look for cases where the short run stock price reaction to a report is followed by a reversal in the long run.} As we have discussed earlier in Corollary \ref{misleading}, misleading disclosure is captured by region C in Figure \ref{equilibrium2} and region B in Figure \ref{equilibrium3}. All else equal, misleading disclosure reduces the informativeness of the price, as the price is most informative when both pieces of information are disclosed. 

We start with the predictions on the elaborateness of reports.
\begin{proposition}
	\label{maincs}
	\begin{enumerate}
	  \item The elaborateness of reports increases with the investor's competence. That is, $\frac{\partial \bar{x}}{\partial \beta}<0$ and $\frac{\partial \underline{x}}{\partial\beta}>0$. 	
	  
	  \item The firm’s information environment ($q$) has an ambiguous effect on the elaborateness
	  of reports.
	  	\begin{enumerate}
	  	 \item When $r<r_0$, the frequency of negative (positive) elaborate reports decreases (increases) with the firm's information environment: $\frac{\partial \bar{x}}{\partial q}<0$ and $\frac{\partial \underline{x}}{\partial q}<0$. 
	  	 \item When $r>r_0$, the frequency of positive (negative) elaborate reports decreases (increases) with the firm's information environment: $\frac{\partial \bar{x}}{\partial q}>0$ and $\frac{\partial \underline{x}}{\partial q}>0$. 
	  	\end{enumerate}
	\end{enumerate}
\end{proposition}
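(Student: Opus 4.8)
The plan is to treat the two disclosure thresholds as the solution of a two-equation system in $(\underline{x},\bar{x})$ indexed by the parameters $\beta$ and $q$, and then apply the implicit function theorem. Take the case $r<r_0$ first; the case $r>r_0$ is symmetric with the roles of $\underline{x}$ and $\bar{x}$ interchanged. The first equation is the belief-consistency condition $\mu_{ND}=\underline{x}$ read off the lower-threshold indifference in Figure \ref{equilibrium2} (short-and-withhold versus short-and-disclose). Writing $\mu_{ND}$ as the posterior mean of $x$ given a simple report and clearing the denominator, this becomes
\begin{equation*}
F_1(\underline{x},\bar{x};\beta)=(1-\beta)\mu_0+\beta\!\int_{\underline{x}}^{\bar{x}}\!xg(x)\,dx-\underline{x}\Big[(1-\beta)+\beta\big(G(\bar{x})-G(\underline{x})\big)\Big]=0 .
\end{equation*}
The second is the upper-threshold indifference (short-and-withhold versus long-and-disclose); after substituting the consistency relation $\mu_{ND}=\underline{x}$ to decouple the parameters, it reads
\begin{equation*}
F_2(\underline{x},\bar{x};q)=2r_0\mu_0-r(1+q)\bar{x}-r(1-q)\underline{x}=0 .
\end{equation*}
The structural observation that drives everything is that $\beta$ enters only $F_1$ and $q$ enters only $F_2$: the non-disclosure belief does not depend on the ex-post revelation probability, while the volatility-matching threshold does not depend on the investor's competence.

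Next I would record the Jacobian signs, all by differentiating under the integral sign. A short computation gives $\partial F_1/\partial\underline{x}=-D<0$, where $D=(1-\beta)+\beta(G(\bar{x})-G(\underline{x}))>0$, and $\partial F_1/\partial\bar{x}=\beta g(\bar{x})(\bar{x}-\underline{x})>0$; for the second equation $\partial F_2/\partial\underline{x}=-r(1-q)$ and $\partial F_2/\partial\bar{x}=-r(1+q)$ are both negative. Hence $\det J=D\,r(1+q)+\beta g(\bar{x})(\bar{x}-\underline{x})\,r(1-q)>0$, so the implicit function theorem applies and produces unique differentiable threshold functions.

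The one genuinely non-mechanical step is signing the parameter derivative $\partial F_1/\partial\beta=(\underline{x}-\mu_0)+\int_{\underline{x}}^{\bar{x}}(x-\underline{x})g(x)\,dx$, which is not obviously signed. Here I would feed the equilibrium condition $F_1=0$ back on itself: rearranged, it states $(1-\beta)(\mu_0-\underline{x})=-\beta\int_{\underline{x}}^{\bar{x}}(x-\underline{x})g(x)\,dx\le 0$, so $\underline{x}\ge\mu_0$ (intuitively, $\mu_{ND}=\underline{x}$ is a weighted average of $\mu_0$ and a truncated conditional mean that strictly exceeds $\underline{x}$, which forces $\mu_0\le\underline{x}$). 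Substituting the same identity back collapses the derivative to $\partial F_1/\partial\beta=(\underline{x}-\mu_0)/\beta\ge 0$. The remaining parameter derivative is immediate: $\partial F_2/\partial q=-r(\bar{x}-\underline{x})<0$. This self-referential signing of $\partial F_1/\partial\beta$ is the main obstacle; once $\underline{x}\ge\mu_0$ (and, in the other case, $\bar{x}\le\mu_0$) is pinned down, the rest is bookkeeping.

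Finally I would solve the linearized system by Cramer's rule. For $\beta$ (holding $q$ fixed) only the first coordinate of the forcing vector is nonzero, and with $\det J>0$, $\partial F_1/\partial\beta\ge 0$, $q<1$ one reads off $\partial\underline{x}/\partial\beta=r(1+q)(\partial F_1/\partial\beta)/\det J>0$ and $\partial\bar{x}/\partial\beta=-r(1-q)(\partial F_1/\partial\beta)/\det J<0$, which is Part 1. For $q$ (holding $\beta$ fixed) only the second coordinate is nonzero, and combining $\partial F_1/\partial\bar{x}>0$, $\partial F_1/\partial\underline{x}<0$ with $\partial F_2/\partial q<0$ yields $\partial\underline{x}/\partial q<0$ and $\partial\bar{x}/\partial q<0$, i.e.\ Part 2(a). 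For $r>r_0$ (Figure \ref{equilibrium3}) the same steps apply with consistency $\mu_{ND}=\bar{x}$, which forces $\bar{x}\le\mu_0$ and hence $\partial F_1/\partial\beta=(\bar{x}-\mu_0)/\beta\le 0$, and with a determinant of the opposite sign; the $\beta$-signs are unchanged while the $q$-signs flip to $\partial\underline{x}/\partial q>0$ and $\partial\bar{x}/\partial q>0$, giving Part 2(b).
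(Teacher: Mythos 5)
Your proposal is correct and follows essentially the same route as the paper: the paper's function $K$ is exactly $-F_1$ (your belief-consistency condition before integration by parts), and both arguments differentiate the same two-equation system, exploit that $\beta$ enters only the consistency condition while $q$ enters only the volatility-matching condition, and sign the thresholds' derivatives via the implicit function theorem. Your self-referential derivation of $\underline{x}\ge\mu_0$ (resp.\ $\bar{x}\le\mu_0$) is the same fact the paper obtains from its existence argument ($\underline{x}\in(\mu_0,r_0\mu_0/r)$), which also gives you the strict inequality $\partial F_1/\partial\beta>0$ needed for the strict conclusions.
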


Part 1 of Proposition \ref{maincs} carries over to our setting the insight from the prior literature that the sender's disclosure is increasing in their information endowment (e.g., \citealt{jung1988disclosure}). As the investor's competence $\beta$ increases, the disclosure of news in both tails increases, causing the non-disclosure region (the joint size of B and C in Figures \ref{equilibrium2} and \ref{equilibrium3}) to shrink, which maps into an increase in the report's elaborateness. Intuitively, the market will be more skeptical about a simple report issued by a more successful hedge fund (e.g., funds with better track records), as it is more likely that such a fund has chosen to conceal additional evidence, rather than having failed to discover it. Thus, more successful hedge funds are more likely to provide elaborate reports.

To understand Part 2 of Proposition \ref{maincs}, first note that when $q$ increases, an investor's misleading disclosure is more likely to be confronted by other sources, and thus the investor issues less misleading disclosure. In other words, as $q$ increases, region C in Figure \ref{equilibrium2} and region B in Figure \ref{equilibrium3} shrink. This explains why $\frac{\partial \bar{x}}{\partial q}<0$ when $r<r_0$ and why $\frac{\partial \underline{x}}{\partial q}>0$ when $r>r_0$. The disclosure of $\theta$ by other information sources holds the investor accountable and promotes honesty. 

On the other hand, the increase in $q$ indirectly affects the investor's disclosure through the pricing of non-disclosure of $x$, compelling the investor to withhold more information. The intuition is as follows: As $q$ increases, we already know that the investor issues less misleading disclosure. As a result, the market reacts more strongly to a simple report, as such a report becomes more convincing on average.  Anticipating this change in market beliefs, the investor withholds more additional evidence that aligns with the direction of the initial evidence, to benefit from the higher price volatility induced by a simple report. This implies that as $q$ increases, region B in Figure \ref{equilibrium2} and region C in Figure \ref{equilibrium3} expand. Thus, we have $\frac{\partial \underline{x}}{\partial q}<0$ when $r<r_0$ and $\frac{\partial \bar{x}}{\partial q}>0$ when $r>r_0$. 

Since the likelihood of negative elaborate reports and the likelihood of positive elaborate reports move in opposite directions, the overall effect of the firm's information environment on the elaborateness of reports is ambiguous.  We conduct numerical analysis to better understand the effect, assuming $x$ is normally distributed with mean $\mu_0$ and variance $\sigma_x^2$. We find that when $\beta$ is small, the elaborateness of reports increases with $q$, as shown in Figure \ref{q_numerical1}. However, the effect can be non-monotonic when $\beta$ is large (i.e., when the investor has high expertise). One such example is shown in Figure \ref{q_numerical2}. 

\begin{figure}[H]
	\centering
	\begin{subfigure}[b]{0.45\textwidth}
		\centering
		\includegraphics[width=\textwidth]{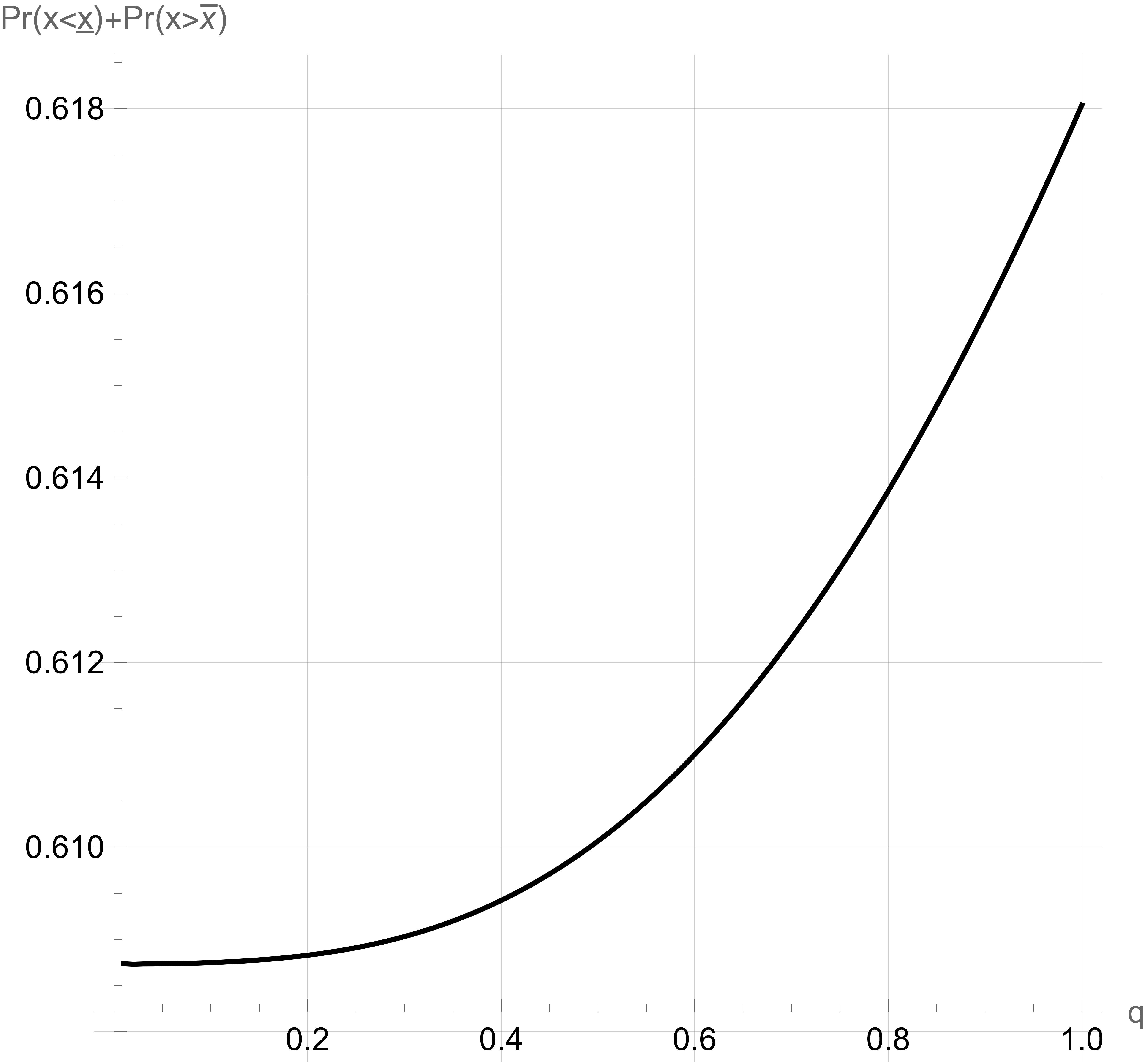}
		\caption{\footnotesize Parameters: $\beta=0.5$, $r_0=1$, $r=0.5$, $\mu_0=1$, $\sigma_x=1/2$}
		\label{q_numerical1}
	\end{subfigure}
	\hfill
	\begin{subfigure}[b]{0.45\textwidth}
		\centering
		\includegraphics[width=\textwidth]{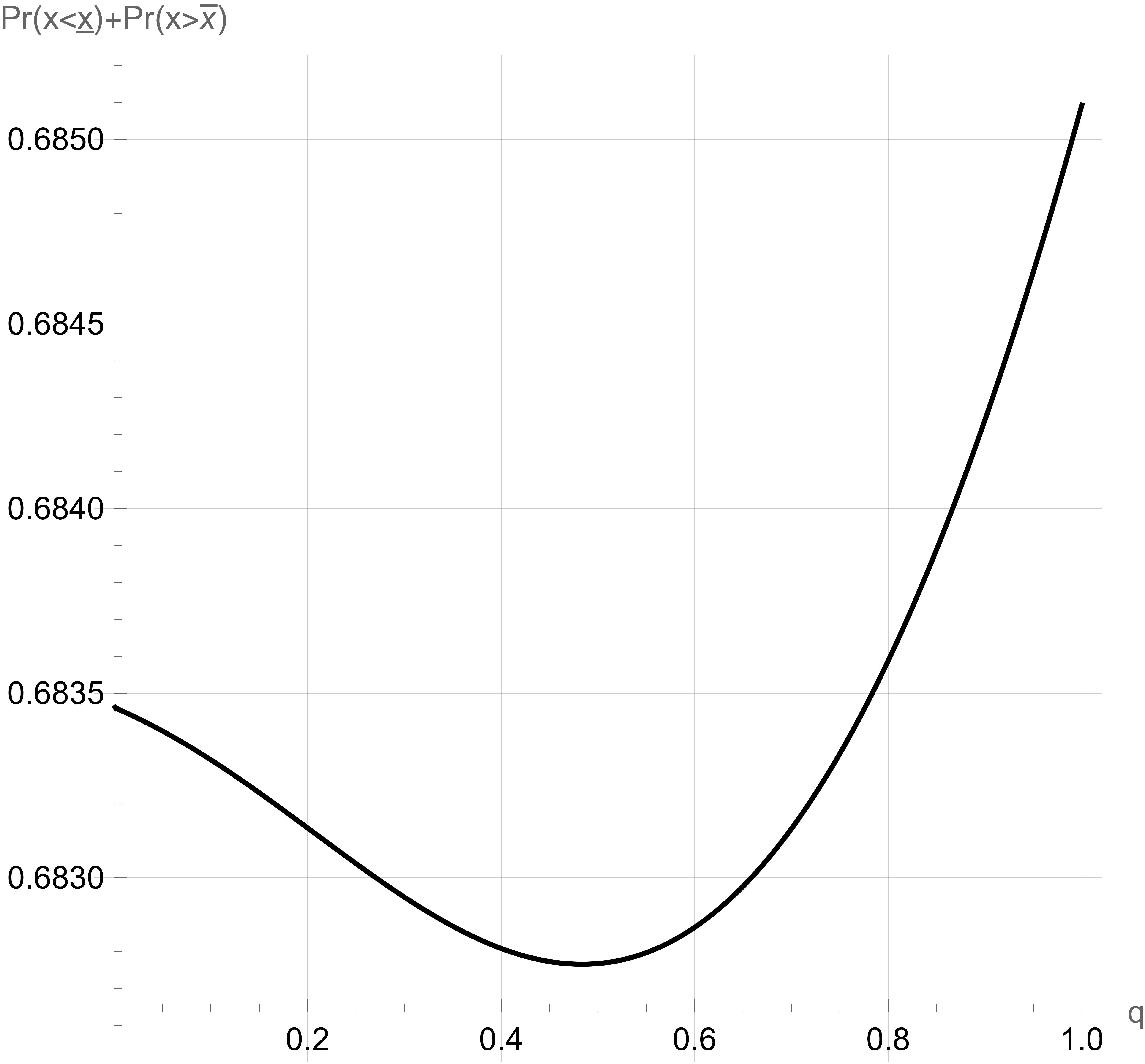}
		\caption{\footnotesize Parameters: $\beta=0.7$, $r_0=1$, $r=0.5$, $\mu_0=1$, $\sigma_x=1/2$}
		\label{q_numerical2}
	\end{subfigure}
	% Main caption for the entire figure
	\caption{\textbf{The elaborateness of reports as a function of $q$}}
	\label{fig:q}
\end{figure}

The next corollary provides predictions on misleading disclosure, market reaction, and the report's extremity.
\begin{corollary}
	\label{maincs2}
	\begin{enumerate}
		\item The likelihood of misleading disclosure decreases with the investor's competence and the firm's information environment.

		\item Holding constant the initial evidence, the stock price reaction
		to a simple report (or equivalently, price volatility following a simple report) decreases with the investor's competence and increases with the firm's information environment.
		
		\item The extremity of elaborate reports decreases with
		the investor's competence. 
	
		\item When $r<r_0$, the extremity of negative (positive) reports increases (decreases)
		with the firm's information environment. When $r>r_0$, the extremity of positive (negative) reports increases (decreases) with the firm's information environment.
	\end{enumerate}
\end{corollary}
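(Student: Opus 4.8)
The plan is to observe that in equilibrium every quantity named in the corollary is an explicit function of the two disclosure thresholds $\underline{x},\bar{x}$ and of the fixed neutral point $r_0\mu_0/r$, so each comparative static collapses to a one-line chain-rule computation whose sign is supplied by Proposition \ref{maincs}. Throughout I would hold the initial evidence $r$ fixed (as the statement requires), write $G$ for the cdf of $x$ and recall $g>0$, and use the equilibrium ordering $\underline{x}<r_0\mu_0/r<\bar{x}$ already established in Corollary \ref{misleading}. The only external inputs are the six threshold signs from Proposition \ref{maincs}: $\partial\bar{x}/\partial\beta<0$, $\partial\underline{x}/\partial\beta>0$, and the $q$-derivatives of both thresholds, which carry the sign of $\mathrm{sgn}(r-r_0)$.

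For Part 1, I would write the misleading-disclosure probability via Corollary \ref{misleading}: it equals $G(\bar{x})-G(r_0\mu_0/r)$ when $r<r_0$ and $G(r_0\mu_0/r)-G(\underline{x})$ when $r>r_0$. Since $r_0\mu_0/r$ does not depend on $\beta$ or $q$, differentiating leaves only $g(\bar{x})\,\partial\bar{x}/\partial(\cdot)$ in the first case and $-g(\underline{x})\,\partial\underline{x}/\partial(\cdot)$ in the second, where $(\cdot)$ denotes $\beta$ or $q$. Inserting $\partial\bar{x}/\partial\beta<0$ and (for $r<r_0$) $\partial\bar{x}/\partial q<0$ in the first case, and $\partial\underline{x}/\partial\beta>0$ and (for $r>r_0$) $\partial\underline{x}/\partial q>0$ in the second, makes both derivatives negative, which is the claim.

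For Part 2 the decisive step is substituting the equilibrium non-disclosure belief. The threshold indifference conditions of Proposition \ref{x disclosure} pin down $\mu_{ND}=\underline{x}$ when $r<r_0$ and $\mu_{ND}=\bar{x}$ when $r>r_0$, so that $\pi_{q=0}=r_0\mu_0-r\underline{x}$ and $\pi_{q=0}=r\bar{x}-r_0\mu_0$ respectively, the absolute value resolving with the right sign precisely because $\underline{x}<r_0\mu_0/r<\bar{x}$. Differentiating with $r$ held fixed gives $-r\,\partial\underline{x}/\partial(\cdot)$ and $r\,\partial\bar{x}/\partial(\cdot)$; the $\beta$-signs render both negative and the $q$-signs (using $\partial\underline{x}/\partial q<0$ for $r<r_0$ and $\partial\bar{x}/\partial q>0$ for $r>r_0$) render both positive, exactly as stated. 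The one conceptual subtlety I would flag is that although $\pi_{q=0}$ is evaluated at realized $q=0$ (no confrontation materializes), the belief $\mu_{ND}$ is formed in the equilibrium with anticipated confrontation probability $q$, which is why it moves with $q$; this is the only place requiring a word of care rather than pure mechanics.

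Parts 3 and 4 follow the same template applied to the extremity measures, which by definition are $G(r_0\mu_0/r)-G(\underline{x})$ (negative), $G(\bar{x})-G(r_0\mu_0/r)$ (positive), and $G(\bar{x})-G(\underline{x})$ (overall). Differentiating and inserting the Proposition \ref{maincs} signs gives, under $\beta$, that all three shrink — since $-g(\underline{x})\,\partial\underline{x}/\partial\beta<0$ and $g(\bar{x})\,\partial\bar{x}/\partial\beta<0$ — which is Part 3; under $q$ the threshold derivatives flip sign with $\mathrm{sgn}(r-r_0)$ and reproduce precisely the increase/decrease pattern of Part 4. Because the neutral point $r_0\mu_0/r$ is a constant endpoint in every interval, there is no genuine obstacle: the whole corollary is a sign-chase off Proposition \ref{maincs}, whose threshold comparative statics already carry all the substantive content, and I expect no step harder than keeping the $r<r_0$ versus $r>r_0$ bookkeeping straight.
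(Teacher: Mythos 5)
Your proposal is correct and matches the paper's (implicit) argument: the paper offers no separate formal proof of Corollary \ref{maincs2}, treating it as an immediate consequence of the threshold comparative statics in Proposition \ref{maincs} together with the definitions of misleading disclosure, market reaction ($\pi_{q=0}=|r_0\mu_0-r\mu_{ND}|$ with $\mu_{ND}=\underline{x}$ or $\bar{x}$), and extremity as probabilities of intervals with the fixed endpoint $r_0\mu_0/r$ — exactly the sign-chase you carry out. Your flagged subtlety about $\mu_{ND}$ being an equilibrium object that varies with $q$ even though $\pi_{q=0}$ conditions on no revelation is also the right (and only) point of care.
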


To understand Part 1 of Corollary \ref{maincs}, first recall from Proposition \ref{maincs} that as $\beta$ increases, the non-disclosure region shrinks. As a result, the scope for misleading disclosure is restricted. In addition, when $q$ increases, an investor's misleading disclosure is more likely to be confronted by other sources, and thus the investor issues less misleading disclosure.\footnote{However, this does not necessarily imply an increase in price informativeness. As shown in Proposition \ref{maincs}, when $q$ increases, the investor also withholds more additional evidence that aligns with the direction of the initial evidence, which reduces price informativeness. Therefore, the overall effect of the firm's information environment on price informativeness remains ambiguous.}

Part 2 of Corollary \ref{maincs} is perhaps surprising since one might expect that market reaction would increase with competence. However, this intuition may not be true after considering the investor's endogenous disclosure strategy. More successful hedge funds are more able to produce elaborate reports, which makes their simple reports more suspicious, and thus, the market reaction decreases with competence. Part 2 also links the market's reaction with the firm's information environment. This relationship would be ambiguous without solving for the investor's endogenous disclosure strategy. 

Since elaborate reports become less extreme when they become more frequent, Part 3 of Corollary \ref{maincs} follows directly from Proposition \ref{maincs}. For instance, in Figures \ref{equilibrium2} and \ref{equilibrium3}, as the size of region A becomes larger, there are more elaborate negative reports. However, as the size of region A increases and more elaborate negative reports are disclosed, the disclosed value of $x$ becomes closer to point $O$ and thus less extreme. If we map these comparative statics to the real world and consider elaborate reports by hedge funds, those that are more successful should, on average, issue less aggressive reports.

Our analysis so far focuses on the effects of $\beta$ and $q$ on the disclosure thresholds ($\underline{x}$ and $\bar{x}$), but these thresholds are also functions of the initial evidence, $r$. Figure \ref{fig:r} illustrates how various dependent variables change with $r$, assuming $x$ is normally distributed with mean $\mu_0$ and variance $\sigma_x^2$. Panel (a) shows that the thresholds converge as $r$ increases for $r < r_0$ and diverge for $r > r_0$. Panel (b) reveals that the firm value at the upper (lower) threshold decreases (increases) with $r$ for $r < r_0$, and this relationship reverses for $r>r_0$. Panels (c) and (d) depict the non-disclosure price, which always increases with $r$ when $\sigma_x$ is small but may decrease with small $r$ when $\sigma_x$ is large.

\begin{figure}[H]
	\centering
	\begin{subfigure}[b]{0.45\textwidth}
		\centering
		\includegraphics[width=\textwidth]{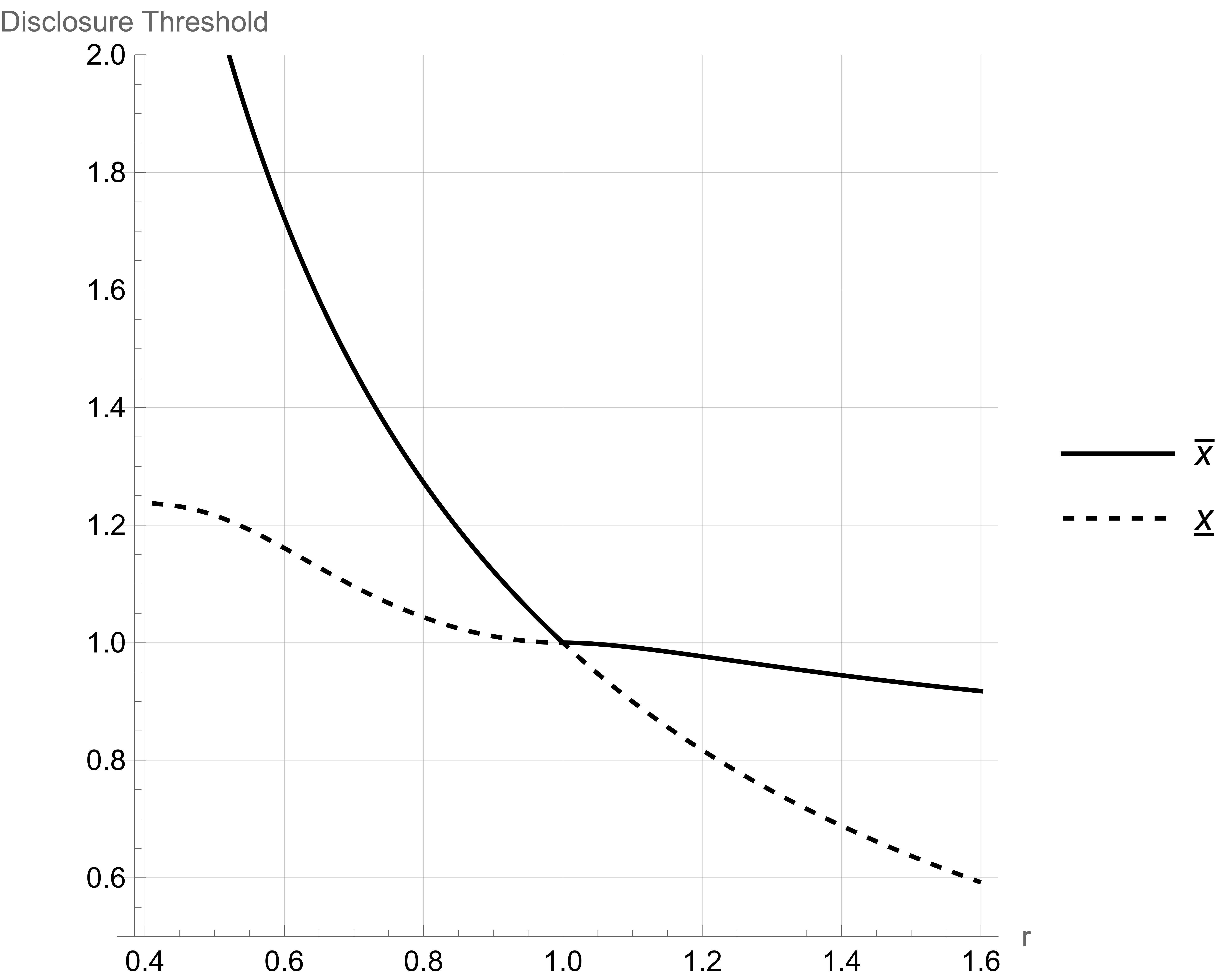} % Replace with your image path
		\caption{\footnotesize The disclosure thresholds as a function of $r$, $\beta=0.7$, $q=0.8$, $r_0=1$, $\mu_0=1$, $\sigma_x=1/2$}
		\label{fig:sub1}
	\end{subfigure}
	\hfill
	\begin{subfigure}[b]{0.45\textwidth}
		\centering
		\includegraphics[width=\textwidth]{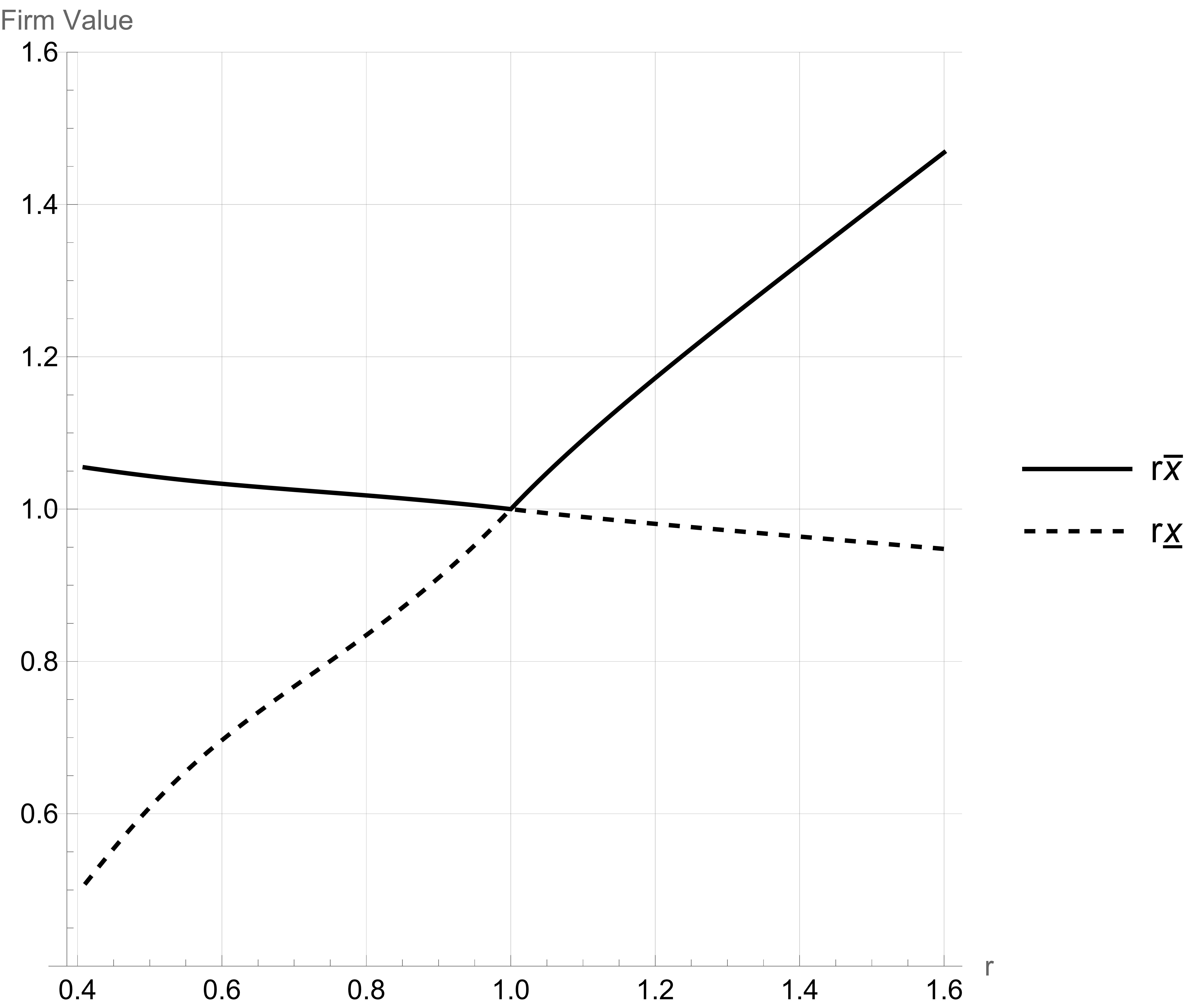} % Replace with your image path
		\caption{\footnotesize Firm value at the disclosure thresholds as a function of $r$, $\beta=0.7$, $q=0.8$, $r_0=1$, $\mu_0=1$, $\sigma_x=1/2$}
		\label{fig:sub2}
	\end{subfigure}
	
	\vspace{0.5cm} % Add vertical space between rows
	
	\begin{subfigure}[b]{0.45\textwidth}
		\centering
		\includegraphics[width=\textwidth]{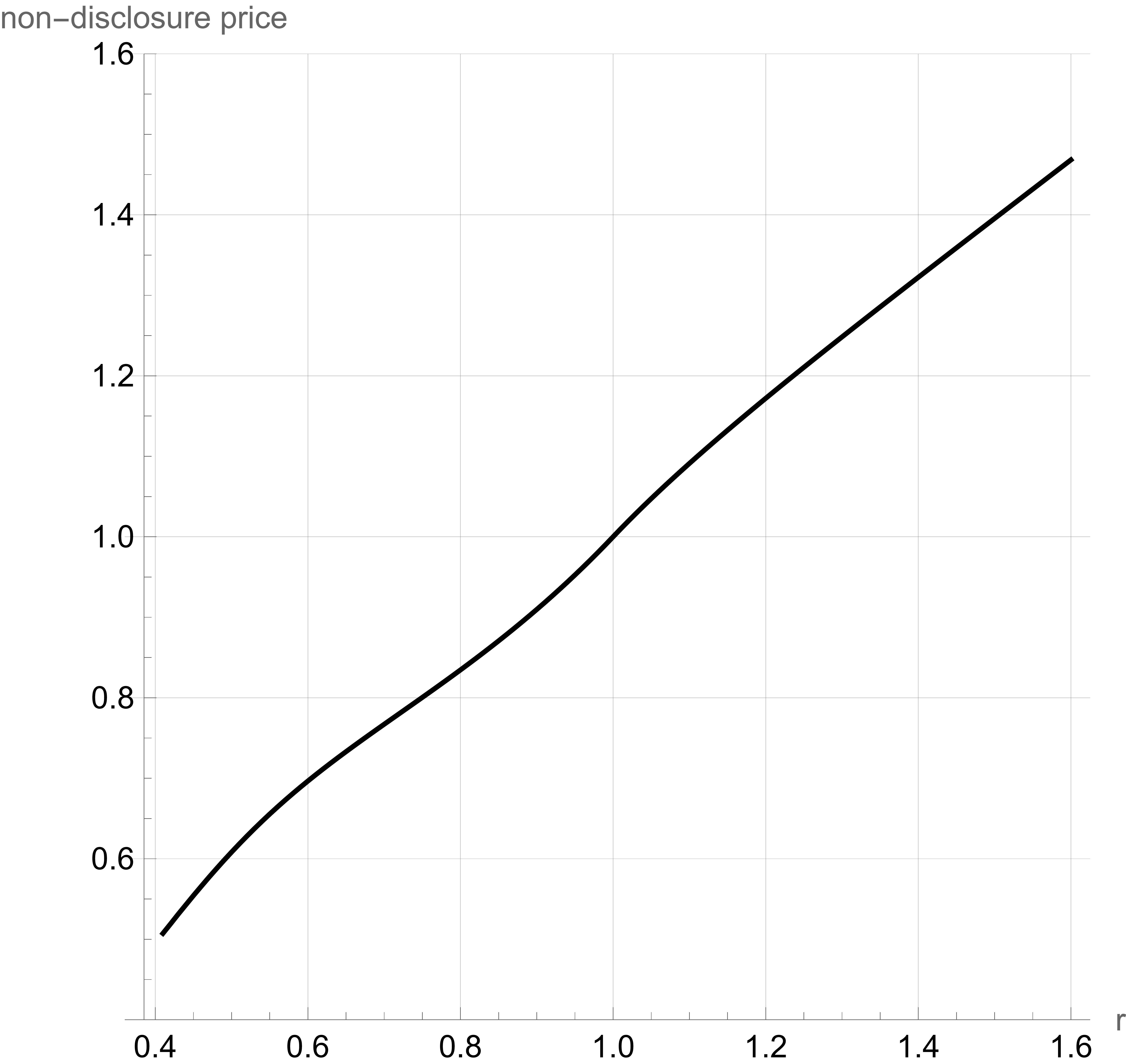} % Replace with your image path
		\caption{\footnotesize Non-disclosure price as a function of $r$, $\beta=0.7$, $q=0.8$, $r_0=1$, $\mu_0=1$, $\sigma_x=1/2$}
		\label{fig:sub3}
	\end{subfigure}
	\hfill
	\begin{subfigure}[b]{0.45\textwidth}
		\centering
		\includegraphics[width=\textwidth]{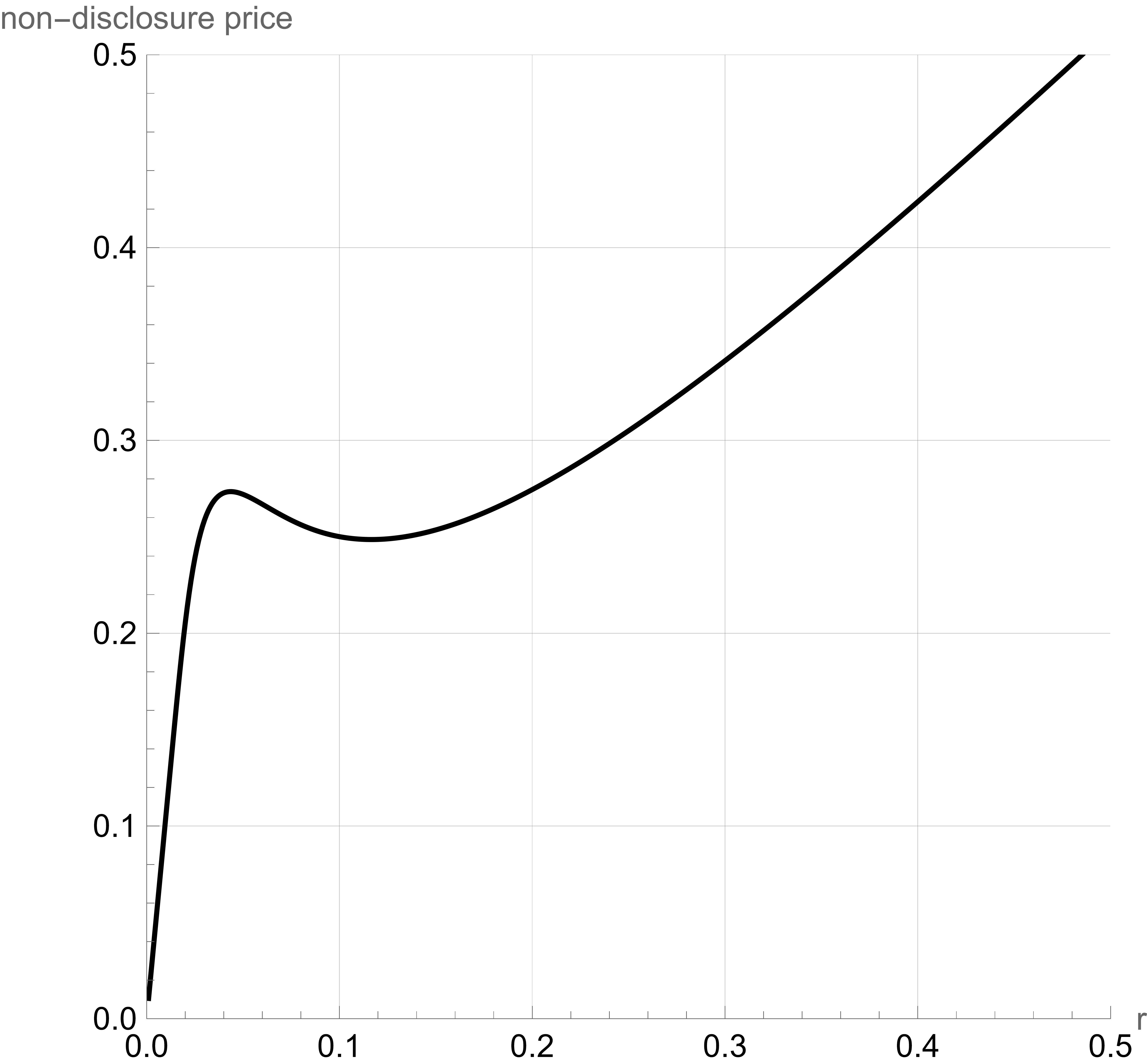} % Replace with your image path
		\caption{\footnotesize Non-disclosure price as a function of $r$, $\beta=0.7$, $q=0.8$, $r_0=1$, $\mu_0=1$, $\sigma_x=20$}
		\label{fig:sub4}
	\end{subfigure}
	\caption{\textbf{Key variables as functions of $r$}}
	\label{fig:r}
\end{figure}

\section{Extension}\label{Section: Extension}
In this section, we extend our baseline model to consider the target firm's endogenous response and reexamine the investor's strategies.

We assume that after the investor issues a simple report, the target firm decides whether to respond. Specifically, the target firm is informed about the additional evidence $x$ with probability $p\in(0,1)$. The informed type can truthfully disclose $x$ or stay silent. The uninformed type can only stay silent. The firm's objective is to maximize the date 3 stock price.\footnote{An alternative approach to model the firm's response is to assume the firm can acquire $x$ with probability $1$ at a cost $c>0$. In this scenario, the firm will never acquire $x$. This is because, conditional on the investor’s non-disclosure of $x$, the firm’s expectation of $x$ is the same as the market’s (i.e., $\mu_{ND}$). Therefore, from the firm’s perspective, the expected price will remain the same if it acquires and discloses $x$ (i.e., $E[P_2|m=(r,\emptyset)]=E[rx|m=(r,\emptyset)]=r\mu_{ND}$). As a result, none of our baseline results change. For this reason, we choose to model the firm's response under uncertainty about information endowment (\citealt{dye1985disclosure}).}

The timeline is as follows:
\begin{itemize}
	\item At date 1, the investor learns $r$ with probability $\alpha$. Conditional on learning $r$, the investor learns $x$ with probability $\beta$. After this information event, the investor establishes an initial position $\rho $ at price $P_{1}=E[\theta]=r_0\mu_0$.
	
	\item At date 2, the investor makes the disclosure decisions. After the investor issues a simple report, the informed firm decides whether to disclose $x$.
	
	\item At date 3, the investor closes their initial position at the updated stock
	price $P_{2}$.
\end{itemize}

The next proposition presents our main results for this extension.
\begin{proposition}\label{prop:firm response}
	There exists an equilibrium in which:
	\begin{enumerate}
		\item The informed investor and the partially informed investor always disclose the initial evidence $r$. 
		
		\item  For any realization of $r$, the informed investor's unique equilibrium disclosure and trading decision are characterized by three regions:
		\begin{enumerate}
			\item	For $x\in (-\infty,\underline{x}_p)$, the investor takes a short position and discloses $x$.
			
			\item   For $x\in (\bar{x}_p,+\infty)$, the investor takes a long position and discloses $x$.
			
			\item	For $x\in [\underline{x}_p,\bar{x}_p]$, the investor withholds $x$ and takes a short (long) position when $r<\bar{r}$ ($r>\bar{r}$).
		\end{enumerate}
The partially informed investor takes a short (long) position when $r<\bar{r}$ ($r>\bar{r}$). The thresholds $\underline{x}_p$, $\bar{x}_p$, and $\bar{r}$ are defined in the appendix.

	\item When $r<\bar{r}$ ($r>\bar{r}$), the informed firm responds to the investor's simple report if and only if $x>\underline{x}_p$ ($x>\bar{x}_p$). 
	\end{enumerate}
\end{proposition}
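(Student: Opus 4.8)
The plan is to construct the equilibrium by guess-and-verify, treating the three claims jointly through a single self-consistency condition on the non-disclosure belief $\mu_{ND}$. I would start from the firm's best response, which is the only genuinely new primitive. Since the firm maximizes the date-3 price and a simple report leaves the market with posterior mean $\mu_{ND}$ for $x$, the informed firm faces a one-sided \cite{dye1985disclosure} problem: disclosing yields price $rx$ while silence yields $r\mu_{ND}$, so the firm discloses if and only if $rx>r\mu_{ND}$, i.e. $x>\mu_{ND}$. This already forces the firm's response threshold to equal $\mu_{ND}$ in both regions; the content of Part 3 is then that $\mu_{ND}=\underline{x}_p$ when $r<\bar{r}$ and $\mu_{ND}=\bar{x}_p$ when $r>\bar{r}$, which I would obtain from the investor's boundary indifference below.

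Next I would derive the investor's disclosure rule as a refinement of Lemma \ref{volatility}. Because an elaborate report preempts any firm response, disclosure yields the certain price $rx$ and profit $|rx-r_0\mu_0|$. Under a simple report the investor, knowing $x$, anticipates that the informed firm discloses exactly when $x>\mu_{ND}$, so the expected closing price upon withholding is $r\mu_{ND}$ when $x\le\mu_{ND}$ and $p\,rx+(1-p)r\mu_{ND}$ when $x>\mu_{ND}$. Writing $u=rx-r_0\mu_0$ and $B=r\mu_{ND}-r_0\mu_0$, the withhold-versus-disclose comparison is $|u|\le|B|$ in the first regime and reduces to $(1+p)u^2-2pBu-(1-p)B^2\le 0$ (roots $u=B$ and $u=-\tfrac{(1-p)}{1+p}B$) in the second. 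When $r<\bar{r}$ one has $B<0$ and the binding regime is $x>\mu_{ND}$, so the withholding interval is $[\mu_{ND},\bar{x}_p]$ with $\underline{x}_p=\mu_{ND}$ and an upper threshold drawn inward by the firm's response; when $r>\bar{r}$ one has $B>0$, the binding regime is $x\le\mu_{ND}$, and the interval is symmetric about the prior with $\bar{x}_p=\mu_{ND}$. Either way the $\mu_{ND}$-endpoint coincides with the firm's cutoff, delivering Part 3, while the quadratic yields the two-tailed structure of Part 2. I would also verify that $p\,rx+(1-p)r\mu_{ND}-r_0\mu_0$ keeps a constant sign across $[\underline{x}_p,\bar{x}_p]$, so that the investor's position is uniformly short (long) for $r<\bar{r}$ ($r>\bar{r}$), and likewise for the partially informed type.

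The substantive step is closing the fixed point for $\mu_{ND}$. Bayesian consistency requires $\mu_{ND}=E[x\mid\text{simple report and firm silence}]$, where the conditioning event pools the informed investor withholding on $[\underline{x}_p,\bar{x}_p]$ and the partially informed investor, each intersected with firm silence. The firm's cutoff at $\mu_{ND}$ truncates the partially informed contribution from above—removing high realizations of $x$ from the silence pool—which is precisely the disciplinary channel; in the case $r<\bar{r}$ it additionally removes the interior of the withholding region whenever the firm is informed, whereas for $r>\bar{r}$ the withholding region lies entirely below the cutoff and the firm never rebuts the withholding type directly. Substituting $\underline{x}_p,\bar{x}_p$ as the functions of $\mu_{ND}$ obtained above, I would reduce everything to a single equation $\mu_{ND}=\Phi(\mu_{ND})$ and establish existence and uniqueness by showing $\mu_{ND}-\Phi(\mu_{ND})$ is continuous and monotone (equivalently, that $\Phi$ is a contraction), using that a truncated conditional mean moves strictly less than one-for-one with its own cutoff. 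The region boundary $\bar{r}$ is then defined by the value of $r$ at which $B(r)=r\mu_{ND}(r)-r_0\mu_0$ changes sign; continuity of the solution in $r$ makes $\bar{r}$ well defined and unique and produces the switch in the investor's position and in the firm's threshold.

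The main obstacle I anticipate is exactly this simultaneity: the investor's thresholds, the firm's cutoff, and the market belief $\mu_{ND}$ are mutually determined, and the firm's cutoff enters the silence pool \emph{asymmetrically} across regions, so the map $\Phi$ has a different functional form on either side of $\bar{r}$. Proving that $\Phi$ is a contraction—hence that the thresholds $\underline{x}_p,\bar{x}_p,\bar{r}$ exist and are unique—requires controlling how the truncated conditional mean responds to movements in its own cutoff, and this is where I expect most of the care to be needed. The remaining Part 1 I would handle exactly as in Proposition \ref{THM: r-disclosure_continuous}: withholding $r$ yields the prior price and zero profit, whereas fully disclosing both signals—an option the firm cannot rebut—guarantees strictly positive volatility and hence positive profit, so silence is strictly dominated; the belief after an off-path disclosure of $r$ is disciplined by the same GPFE/neologism-proofness argument invoked there, and the argument of Proposition \ref{x disclosure} then ports over verbatim once the firm's response is folded into the closing price.
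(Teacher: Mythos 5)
Your proposal is correct and follows essentially the same route as the paper: the firm's cutoff at $\mu_{ND}$, the investor's boundary indifference conditions (your quadratic in $u$ is an algebraic repackaging of the paper's equations \eqref{lowerxp}--\eqref{upperxp} and \eqref{xpupper}--\eqref{xplower}), the asymmetric Bayesian pooling across the two regions, reduction to a one-dimensional monotone equation for existence and uniqueness of the thresholds, the sign-switch definition of $\bar{r}$, and the deviation check for full disclosure of $r$. The only minor imprecision is that uniqueness of $\bar{r}$ comes from monotonicity of the relevant condition (the paper's $V(r)$), not from continuity alone, and the paper also records the corner case $\bar{r}=+\infty$.
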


Proposition \ref{prop:firm response} shows that our main results regarding the investor's disclosure strategy remain qualitatively unchanged after considering the firm's endogenous response. Similar to Proposition \ref{THM: r-disclosure_continuous} in the baseline model, Part 1 shows that the investor always disclosing $r$ is still an equilibrium. Even though the firm's response may dilute the informed investor's profit from withholding $x$, the informed investor can always guarantee positive profit by disclosing both $r$ and $x$. Thus, the full disclosure of $r$ is an equilibrium. In this equilibrium, the uninformed investor does not trade, as the price does not move upon non-disclosure of $r$.
 
Similar to Proposition \ref{x disclosure} in the baseline model, Part 2 demonstrates that the investor discloses extreme news and withholds moderate news. In equilibrium, the market's belief about $x$ conditional on its non-disclosure, $\mu_{ND}$, is a constant. Consequently, due to trading flexibility and for the same reasons outlined in the baseline model, the investor will disclose sufficiently high or low values of $x$ when the price volatility from disclosing $x$ exceeds that from withholding it. Compared with the baseline model where information disclosure by other sources is exogenous, the non-disclosure of $x$ can also indicate the firm's withholding of $x$. Thus, $\mu_{ND}$ is different from the baseline. As a result, the thresholds $\underline{x}_p$, $\bar{x}_p$, and $\bar{r}$ are also different from the baseline due to the firm's endogenous response. 

Since the firm's objective is to maximize the price, it will disclose $x$ if and only if $x$ is above $\mu_{ND}$. When the initial evidence is sufficiently unfavorable (i.e., when $r<\bar{r}$), the non-disclosure price is given by $r\mu_{ND}=r\underline{x}_p$, and the firm's disclosure threshold coincides with the investor's lower disclosure threshold. In this case, the firm aims to increase the stock price by withholding $x$, and the investor seeks to decrease the price and profit from the short position by withholding $x$. Consequently, when the investor withholds $x$, the informed firm will always disclose $x$ to increase the stock price. 

When the initial evidence is sufficiently favorable (i.e., when $r>\bar{r}$), the non-disclosure price is given by $r\mu_{ND}=r\bar{x}_p$, and the firm's disclosure threshold coincides with the investor's upper disclosure threshold. In this case, when the investor withholds $x$, the informed firm will not respond, as such withholding aligns with the firm's incentive to maximize stock price. However, when the investor is uninformed about $x$ and issues an honest simple report, the informed firm will disclose high values of $x$ to further increase the stock price.

The next corollary examines misleading disclosure and the predictions on the investor's competence. 
\begin{corollary}
	\label{cs3}
	\begin{enumerate}
\item Misleading disclosure arises when $x\in[\frac{r_0\mu_0}{r},\bar{x}_p]$ for $r<\bar{r}$ and when $x\in [\underline{x}_p,\frac{r_0\mu_0}{r}]$ for $r>\bar{r}$.  

\item The elaborateness of reports increases with the investor's competence. That is, $\frac{\partial \bar{x}_p}{\partial \beta}<0$ and $\frac{\partial \underline{x}_p}{\partial\beta}>0$.	
	 \end{enumerate}
\end{corollary}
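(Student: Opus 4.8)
For Part 1, I would argue directly from Definition \ref{def:misleading} and the equilibrium characterization in Proposition \ref{prop:firm response}, mirroring the baseline Corollary \ref{misleading}. Fix $r<\bar{r}$. By Part 2 of Proposition \ref{prop:firm response} the investor withholds $x$ exactly on $[\underline{x}_p,\bar{x}_p]$ and shorts, and by Part 3 the resulting simple-report (non-disclosure) price equals $r\mu_{ND}=r\underline{x}_p$. The firm value $rx$ equals the prior $r_0\mu_0$ precisely at $x=\frac{r_0\mu_0}{r}$, so the first step is to show that this point lies strictly inside the withholding interval. The inequality $\underline{x}_p<\frac{r_0\mu_0}{r}$ is immediate, because the $r<\bar{r}$ regime is exactly the one in which the simple-report price $r\underline{x}_p$ lies below the prior; the inequality $\frac{r_0\mu_0}{r}<\bar{x}_p$ follows because at the upper threshold the investor switches to a long position and discloses, which is profitable only when $r\bar{x}_p>r_0\mu_0$ (otherwise disclosure would produce a smaller downward move than withholding and would never be chosen). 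Consequently, for $x\in[\frac{r_0\mu_0}{r},\bar{x}_p]$ the news is good ($rx\ge r_0\mu_0$) while the simple report drives the price down to $r\underline{x}_p<r_0\mu_0$, which is the misleading configuration. The case $r>\bar{r}$ is symmetric: the non-disclosure price $r\bar{x}_p$ lies above the prior, and on $[\underline{x}_p,\frac{r_0\mu_0}{r}]$ the news is bad ($rx<r_0\mu_0$) while the price rises.

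For Part 2, I would work with the equilibrium system that pins down the thresholds. Fixing the regime $r<\bar{r}$, this system consists of the upper-threshold indifference condition, which equates the withholding volatility $\pi_p$ with the disclosure volatility $\pi$ at $x=\bar{x}_p$ (where, because of the firm's endogenous response, the investor's expected simple-report price is $p\,r\bar{x}_p+(1-p)\,r\underline{x}_p$), together with the fixed-point condition $\mu_{ND}=\underline{x}_p$, in which $\mu_{ND}$ is the posterior mean of $x$ conditional on non-disclosure by both the investor and the firm. Competence $\beta$ enters this system only through $\mu_{ND}$: a larger $\beta$ increases the mass of informed investors who withhold moderate $x\in[\underline{x}_p,\bar{x}_p]$ relative to the partially informed investors whose $x$ is drawn from the full distribution, thereby reweighting the conditional expectation. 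I would then totally differentiate the two equilibrium equations in $\beta$ and verify the claimed signs $\frac{\partial\underline{x}_p}{\partial\beta}>0$ and $\frac{\partial\bar{x}_p}{\partial\beta}<0$; the regime $r>\bar{r}$ is handled symmetrically, holding $r$ fixed away from $\bar{r}$ so that the regime does not switch under the perturbation.

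The main obstacle is signing the comparative statics in Part 2, since $\mu_{ND}$ is itself a fixed point coupling both thresholds with the firm's strategic disclosure, so one cannot simply read off a partial derivative. The argument I would use exploits that, in equilibrium, the non-disclosure pool (which has mean $\mu_{ND}=\underline{x}_p$) splits into an informed-withholding sub-pool supported on $[\underline{x}_p,\bar{x}_p]$, whose mean strictly exceeds $\underline{x}_p$, and a partially informed sub-pool, whose mean must therefore fall below $\underline{x}_p$ to balance. Raising $\beta$ shifts weight from the latter to the former and hence pushes $\mu_{ND}$ above $\underline{x}_p$, forcing $\underline{x}_p$ upward once equilibrium is restored; the legitimacy of this restoration step is precisely the requirement that the fixed-point map for $\mu_{ND}$ has slope strictly below one, which is also the condition underlying the uniqueness asserted in Proposition \ref{prop:firm response}. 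This delivers $\frac{\partial\underline{x}_p}{\partial\beta}>0$, and then $\frac{\partial\bar{x}_p}{\partial\beta}<0$ follows immediately by differentiating the upper-threshold relation, which ties $\bar{x}_p$ to $\underline{x}_p$ with a negative coefficient.
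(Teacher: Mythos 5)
Your proposal is correct and follows essentially the same route as the paper: Part 1 is read off from the equilibrium characterization by locating $\frac{r_0\mu_0}{r}$ strictly inside the withholding interval (which the paper's threshold equations $\underline{x}_p<\frac{r_0\mu_0}{r}$ and $(1+p)\bar{x}_p+(1-p)\underline{x}_p=\frac{2r_0\mu_0}{r}$ deliver), and Part 2 is obtained by implicitly differentiating the two-equation threshold system in $\beta$, exactly as the paper does with the functions $K$ and $S$. Your pool-reweighting and fixed-point-slope narrative is an intuitive gloss on the same implicit-function-theorem sign computation rather than a different argument.
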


Part 1 of Corollary \ref{cs3} shows that misleading disclosure still arises in equilibrium. Part 2 shows that the empirical predictions in Corollary \ref{maincs2} regarding the investor's competence remain qualitatively unchanged after incorporating the firm's endogenous response.

\section{Conclusions \label{Section: Conclusion}}
Our model suggests that misleading disclosure can occur when investors selectively withhold multi-dimensional information. This practice is more defensible legally than spreading unfounded rumors or disseminating fabricated reports, as it is difficult for regulators to prove that investors have deliberately withheld evidence. Therefore, we believe that this practice is more widespread than those involving misinformation and rumors. While the model's insights primarily apply to investors with direct trading incentives, such as hedge funds, they also apply to those with incentives to indirectly generate trading volume. For instance, considering that sell-side analysts are compensated based on the trading volume generated by their reports (e.g., \citealt{jackson2005trade} and \citealt{karmaziene2023greater}), our results suggest that they may also engage in misleading disclosure.

Our study has the following limitations. First, we assume the investor has a short horizon and closes their position after disclosure. If the investor can trade multiple times, they may engage in dynamic disclosure management. For instance, the investor can first disclose the decline in items per order, buy at depressed prices, and then resell after disclosing that the decline in items per order is actually good news. We examine this possibility in Appendix S1.

Second, our analysis focuses solely on the price impact of the investor's disclosure and does not model the price impact of the investor's trading. Incorporating the investor's multi-dimensional information into a trading game, such as \cite{kyle85e} and \cite{glosten1985bid}, would render our model intractable. Therefore, we present our analysis, specifically the unraveling result in Proposition \ref{THM: r-disclosure_continuous}, as a benchmark. An interesting avenue for future research is to examine whether and how the investor's disclosure incentives will be affected by the price impact of trading. 

Third, we only model investors' verifiable disclosure. Future research can examine alternative situations where investors' disclosure is partially unverifiable. For example, one can consider a scenario similar to \cite{dutta2002interpretation}, where the initial evidence $r$ is hard information that can be credibly disclosed, whereas the additional signal $x$ is soft information that cannot be credibly disclosed.

Finally, we do not investigate how investors’ disclosure affects firms' behavior. For example, whereas activist short-selling may deter firms from engaging in earnings management, it may also disrupt firms' normal communication using discretionary accruals, because discretionary accruals can be seen as red flags for earnings management. How does investors’ disclosure affect the overall amount of information reflected in stock prices? Can the benefits of activist short-selling be outweighed by the costs? We leave these questions to future research.

\newpage
\appendix
\section*{Appendix}
\subsection*{Proof of Proposition \ref{THM: r-disclosure_continuous}}
\subsubsection*{Step 0: Unraveling in the single signal case}
Before we prove Proposition \ref{THM: r-disclosure_continuous}, we prove a preliminary unraveling result in the case of a single signal. The proof of this result can help readers better understand the logic behind the proof of Proposition \ref{THM: r-disclosure_continuous}. 

Consider a simplified version of the baseline model where $r=1$ is common knowledge, so that $\theta=x$. At date 1, the investor learns $x$ with probability $\beta$ and establishes an initial position at price $P_{1}=E[x]=\mu_0$. At date 2, the investor makes the disclosure decision about $x$. At date 3, the investor closes their initial position at the updated stock price $P_{2}$.
\begin{lemma} \label{THM: r-disclosure_single}
	The investor always discloses $x$, regardless of its realization. 
\end{lemma}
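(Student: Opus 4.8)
The plan is to reduce the informed investor's date-2 decision to a comparison of two trading profits and then pin down the single number $\mu_{ND}$ that any equilibrium must generate. Because trading carries no price impact and $P_1=\mu_0$, an investor holding position $\rho\in[-1,1]$ earns $\rho(P_2-\mu_0)$. If the informed investor discloses, the market learns $x$, so $P_2=x$ and the optimal position $\rho=\mathrm{sign}(x-\mu_0)$ yields profit $|x-\mu_0|$. If instead they withhold, $P_2=\mu_{ND}$ and the optimal position yields profit $|\mu_{ND}-\mu_0|$. Hence the informed investor discloses if and only if $|x-\mu_0|\ge|\mu_{ND}-\mu_0|$, so the withholding region is the interval $W=\{x:|x-\mu_0|<|\mu_{ND}-\mu_0|\}$ centered at the prior mean $\mu_0$. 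This centering---a direct consequence of the investor profiting from price volatility rather than from the price level---is what distinguishes the argument from the lower-tail withholding in \cite{dye1985disclosure}.

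Next I would impose Bayesian consistency on $\mu_{ND}$. Non-disclosure occurs when the investor is uninformed (probability $1-\beta$, with $x$ still drawn from the prior and conditional mean $\mu_0$) or is informed with $x\in W$ (probability $\beta$). Writing $\mu_{ND}$ as the resulting posterior mean of $x$ and clearing the denominator gives the identity
\[
(1-\beta)(\mu_{ND}-\mu_0)=\beta\int_W (x-\mu_{ND})\,g(x)\,dx.
\]
This single equation, together with the requirement that $W$ be the best-response interval described above, characterizes every candidate equilibrium value of $\mu_{ND}$.

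The crux is a sign argument on this identity. Suppose $\mu_{ND}>\mu_0$. Then $W=(2\mu_0-\mu_{ND},\mu_{ND})$ lies strictly below $\mu_{ND}$, so the integrand $x-\mu_{ND}$ is negative throughout $W$ and the right-hand side is negative (using that $g$ has full support, so $W$ carries positive mass), while the left-hand side is strictly positive---a contradiction. The case $\mu_{ND}<\mu_0$ is symmetric. Hence $\mu_{ND}=\mu_0$, which forces $W$ to be a single point of measure zero. With $\mu_{ND}=\mu_0$ the informed investor strictly prefers to disclose for every $x\neq\mu_0$, so they disclose for almost every realization (and, by the usual tie-break at $x=\mu_0$, for every realization). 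Finally I would verify that ``always disclose'' is indeed an equilibrium: non-disclosure then occurs only when the investor is uninformed, so the posterior mean upon non-disclosure is exactly $\mu_0=\mu_{ND}$, confirming consistency.

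I expect the main obstacle to be conceptual rather than computational: recognizing that the right object to solve for is the scalar $\mu_{ND}$, that the best-response withholding region is symmetric about $\mu_0$, and that Bayesian consistency then delivers a sign contradiction. Two points deserve care. First, the full-support assumption on $g$ is what makes the right-hand side strictly signed and hence rules out nondegenerate partial-disclosure equilibria, including the babbling candidate $W=\mathbb{R}$. Second, no equilibrium refinement is needed here, because non-disclosure lies on the equilibrium path and its associated belief is pinned down by Bayes' rule---in contrast to Proposition \ref{THM: r-disclosure_continuous}, where GPFE is invoked to discipline off-path beliefs about the disclosure of $r$.
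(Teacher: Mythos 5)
Your proof is correct, but it takes a genuinely different route from the paper's. The paper argues by a single deviation: it supposes an arbitrary positive-measure withholding set $W$, notes that $\mu_{ND}$ is a strict convex combination of $E[x\mid W]$ and $\mu_0$ so that $|E[x\mid W]-\mu_0|>|\mu_{ND}-\mu_0|$, and then exhibits a type $x_l\in W$ even more extreme than the conditional mean of $W$ who would profitably deviate to disclosure. That argument never solves for $\mu_{ND}$ and never needs to know the shape of $W$. You instead first pin down the shape of the best-response withholding set (a symmetric interval around $\mu_0$, which is the right structural observation and is what distinguishes this setting from the lower-tail withholding in \cite{dye1985disclosure}), then write the Bayes fixed-point identity $(1-\beta)(\mu_{ND}-\mu_0)=\beta\int_W(x-\mu_{ND})g(x)\,dx$ and kill every nondegenerate solution with a sign argument, forcing $\mu_{ND}=\mu_0$ and hence a measure-zero withholding region. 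Your route buys more structure --- it shows explicitly that the non-disclosure belief must coincide with the prior and previews the two-threshold logic of Proposition \ref{x disclosure} --- at the cost of being slightly longer; the paper's route is shorter and more robust in that it disposes of arbitrary candidate withholding sets without characterizing them. One small remark: your invocation of full support of $g$ is not actually needed for the sign contradiction, since even if $W$ carried zero mass the right-hand side would be zero while the left-hand side is strictly signed. Your closing observations --- that the result requires no refinement because non-disclosure is on-path here, unlike Proposition \ref{THM: r-disclosure_continuous} --- are accurate and consistent with the paper.
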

\begin{proof}
We prove Lemma \ref{THM: r-disclosure_single} by contradiction. Suppose there exists a non-zero measure set $W$ such that the investor withholds $x$ when $x\in W$. We use $U$ to denote the event in which the investor is uninformed about $x$. We use $ND\equiv W\cup U$ to denote the event of non-disclosure. The date 3 price conditional on non-disclosure, which we denote by $\mu_{ND}$, is therefore 
	\begin{align*}
		\mu_{ND}&=\Pr (W|ND)E[x |W]+\Pr (U|ND)E[x|U] \\
		& =\Pr (W|ND)E[x|W]+(1-\Pr (W|ND))\mu_0.
	\end{align*}
Recall that the investor maximizes price volatility when making the disclosure decision. When the investor withholds $x$, the price volatility is $v_{ND}\equiv|\mu_{ND}-\mu_0|$. Since $\mu_{ND}$ is a weighted average of $E[x|W]$ and $\mu_0$, we have $|E[x|W]-\mu_0|>v_{ND}$. Furthermore, because $W$ has non-zero measure, there exists a type $x_l\in W$ such that $|x_l-\mu_0|>|E[x|W]-\mu_0|>v_{ND}$. However, this inequality suggests that the investor can induce a higher price volatility by disclosing $x_l$, which contradicts $x_l\in W$.
\end{proof}

\subsubsection*{Step 1: Preparations}
We now proceed to the proof of Proposition \ref{THM: r-disclosure_continuous}. We first introduce some notations. We use $T$ to denote the set of all possible types: $T=\{(r,x),(r,\emptyset),(\emptyset,\emptyset)|r\in \mathbb{R^{+}}, x\in \mathbb{R}\}$, where $(r,\emptyset)$ means the investor is informed about $r$ and uninformed about $x$, and $(\emptyset,\emptyset)$ means the investor is uninformed about both $r$ and $x$. We use $d$ to denote the investor's disclosure strategy, which maps the investor's type $t\in T$ to a reported message $m\in T$. The informed types can imitate the uninformed types but not vice-versa. Therefore, the possible strategies are: $d(\emptyset,\emptyset)=(\emptyset,\emptyset)$, $d(r,\emptyset)\in\{(r,\emptyset),(\emptyset,\emptyset)\}$, and $d(r,x)\in\{(r,x),(r,\emptyset),(\emptyset,\emptyset)\}$.\footnote{When the investor withholds $r$, they must also withhold $x$ because the disclosure of $x$ reveals that the investor is informed about $r$.} 

The equilibrium concept we use is Grossman–Perry–Farrell equilibrium (GPFE) defined in \cite{bertomeu2018verifiable}, which is a refinement of perfect Bayesian equilibrium (PBE). A strategy profile $\zeta$ is a PBE if: (i) The investor's disclosure and trading strategy maximizes their expected profit, given the market's pricing function $P_2(.)$; (ii) For any message $m\in d(T)$, the market's pricing function satisfies: $P_2(m)=E[\theta|t\in d^{-1}(m)]$.

PBE does not put restrictions on off-equilibrium beliefs. In our game, the message $(r,\emptyset)$ can be off-equilibrium, which requires us to apply equilibrium refinement techniques to pin down the market's belief. Definition \ref{self1} adapts the definition of self-signaling set from \cite{bertomeu2018verifiable} to our context.
\begin{definition}\label{self1}
	A non-empty set $\chi$ of types is \textit{self-signaling} relative to $\zeta$ if there exists an off-equilibrium message $m^*$ such that: (i) Conditional on receiving $m^*$, the market believes that the investor's type is in $\chi$ (i.e., $P_2(m^*)=E[\theta|t\in\chi]$). (ii) Types in $\chi$ strictly benefit from sending $m^*$. (iii) Types not in $\chi$ do not benefit from sending $m^*$. 
\end{definition}
The idea of Definition \ref{self1} is that an equilibrium is not sensible if some types can send a ``convincing'' off-equilibrium message: The types sending the message become strictly better-off if the market holds the correct belief that the message comes from these types. We say $\zeta$ is a GPFE if it is a PBE and there is no self-signaling set. 

We use $\theta_{ND}$ to denote the market's posterior expectation of $\theta$ conditional on non-disclosure (i.e., when there is no disclosure by the investor and also $\theta$ is not revealed by other sources). We assume $\theta_{ND}<r_0\mu_0$ without loss of generality.\footnote{The proof for the case $\theta_{ND}\ge r_0\mu_0$ is completely symmetrical.} We use $W_r$ to denote the set of withheld values of $r$: $W_r=\{r|d(r,x)=(\emptyset,\emptyset) \text{ for some $x$}, \text{ or } d(r,\emptyset)=(\emptyset,\emptyset)\}$. We further decompose $W_r$ into two mutually exclusive subsets, based on whether $d(r,\emptyset)=(\emptyset,\emptyset)$: $W_{a}=\{r|d(r,\emptyset)=(r,\emptyset), \text{ and }d(r,x)=(\emptyset,\emptyset) \text{ for some $x$}\}$; $W_{b}=\{r|d(r,\emptyset)=(\emptyset,\emptyset)\}$.\footnote{Note that $W_a\cup W_b=W_r$ and $W_a\cap W_b=\emptyset$.}

We next establish a lemma that pins down the investor's disclosure strategy when they are informed about both $r$ and $x$.
\begin{lemma}
	\label{short}
	If the investor is informed about both $r$ and $x$ and $d(r,x)=(\emptyset,\emptyset)$, the investor takes a short position. 
\end{lemma}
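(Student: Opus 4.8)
The plan is to pin down the sign of the informed type's optimal position by comparing the profit it earns from withholding both signals against the profit it could guarantee through full disclosure, and then exploiting the running normalization $\theta_{ND}<r_0\mu_0$. Since the type $(r,x)$ buys or shorts at the date-1 price $P_1=r_0\mu_0$ and unwinds at the date-3 price $P_2$, its profit from a position $\rho\in[-1,1]$ is $\rho\bigl(E[P_2]-P_1\bigr)$. Hence the optimal position is long, short, or zero according to the sign of $E[P_2]-P_1$, and the attainable profit equals $\bigl|E[P_2]-P_1\bigr|$. Establishing the lemma therefore reduces to showing that, for a type that withholds in equilibrium, this deviation from the prior is strictly negative.

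First I would compute the expected closing price under withholding. Because $d(r,x)=(\emptyset,\emptyset)$ makes the type imitate the uninformed type, at date 3 the firm value $rx$ is revealed by other sources with probability $q$, and otherwise the price is the market's non-disclosure expectation $\theta_{ND}$; as the informed investor knows $rx$, its expected closing price is $q\,rx+(1-q)\theta_{ND}$. Thus withholding yields profit $\bigl|q\,rx+(1-q)\theta_{ND}-r_0\mu_0\bigr|$, with a short position taken exactly when $q\,rx+(1-q)\theta_{ND}<r_0\mu_0$. The competing deviation I would use is full disclosure of both signals: this reveals $\theta=rx$ and guarantees profit $\bigl|rx-r_0\mu_0\bigr|$.

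The key step, and the only genuine obstacle, is the sign argument. Writing $A=rx-r_0\mu_0$ and $B=\theta_{ND}-r_0\mu_0$, the withholding deviation is the combination $qA+(1-q)B$, and optimality of $d(r,x)=(\emptyset,\emptyset)$ (it must weakly beat full disclosure) requires $\bigl|qA+(1-q)B\bigr|\ge|A|$. I would argue by contradiction: the normalization gives $B<0$ and $1-q>0$, so if $qA+(1-q)B\ge 0$ then $qA\ge-(1-q)B>0$, forcing $A>0$; but then $0\le qA+(1-q)B<qA\le A=|A|$, contradicting the optimality inequality. (When $q=0$ the assumption reduces immediately to $B\ge 0$, again impossible.) Hence $qA+(1-q)B<0$, which is precisely $q\,rx+(1-q)\theta_{ND}<r_0\mu_0$, so the type takes a short position.

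Finally I would note that the argument is entirely driven by the location of $\theta_{ND}$ relative to the prior: under the symmetric normalization $\theta_{ND}\ge r_0\mu_0$ the same computation, with $B\ge 0$, delivers a long position, so the sign of the optimal withholding position is always determined by $\theta_{ND}$ versus $r_0\mu_0$. This makes clear that no case analysis over the realization $(r,x)$ is needed beyond the single convex-combination inequality above.
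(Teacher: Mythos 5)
Your proof is correct and takes essentially the same route as the paper's: both argue by contradiction that, under the normalization $\theta_{ND}<r_0\mu_0$, a withholding type whose expected price change were nonnegative would have to satisfy $rx>r_0\mu_0>\theta_{ND}$ and could then strictly increase price volatility (profit) by disclosing both signals. Your convex-combination inequality $\lvert qA+(1-q)B\rvert<\lvert A\rvert$ is just a compact restatement of the paper's comparison of $rx$ with $(1-q)\theta_{ND}+qrx$.
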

\begin{proof}
	We prove Lemma \ref{short} by contradiction. Suppose there exists a type $(r,x)$, who chooses $d(r,x)=(\emptyset,\emptyset)$ and takes a long position. This type expects the date 3 price to be:
	\begin{align} \label{p2zeta}
		E[P_2|t=(r,x),m=(\emptyset,\emptyset)]=(1-q)\theta_{ND}+qrx.
	\end{align}
	Since the investor takes a long position, the expected date 3 price should be higher than the prior: $E[P_2|t=(r,x),m=(\emptyset,\emptyset)]=(1-q)\theta_{ND}+qrx\ge P_1=r_0\mu_0$. Because $\theta_{ND}<r_0\mu_0$, we have $rx>r_0\mu_0>\theta_{ND}$. However, if the investor discloses both $r$ and $x$, the date 3 price will be $rx$, which is strictly higher than $E[P_2|t=(r,x),m=(\emptyset,\emptyset)]$. We reach a contradiction because the investor can induce a higher price volatility (i.e., increase their profit) by disclosing both $r$ and $x$. Therefore, the investor must take a short position. 
\end{proof}

Having finished the preparations, we prove Proposition \ref{THM: r-disclosure_continuous} by contradiction. Suppose a non-zero measure of types withhold $r$ in equilibrium (i.e., $W_r$ has non-zero measure). We will proceed in several steps. In Step 2, we will reach a contradiction and prove Proposition \ref{THM: r-disclosure_continuous} under the assumption that $W_b$ is empty. In Step 3, we establish some facts about the investor's disclosure strategy when $r\in W_b$. In Step 4, we use Bayes' rule to calculate $\theta_{ND}$ and derive conditions for the equilibrium disclosure thresholds. In Step 5, we apply Definition \ref{self1} and show that $W_b$ is empty if the equilibrium survives the self-signaling test, thus completing the proof.

\subsubsection*{Step 2: Reach a contradiction assuming $W_b$ is empty}
Step 2 is similar to Step 0: We will show that some withholding types can make a higher profit by disclosing both $r$ and $x$. 

Use $A(r)$ to denote the set of $x$ in which the investor withholds both $r$ and $x$ when $r\in W_a$. If $W_b=\emptyset$, we can calculate $\theta_{ND}$ using Bayes' rule:
\begin{align*}
	\theta_{ND}=\frac{\alpha\beta \int_{r\in W_a}f(r)(\int_{\{x|x\in A(r)\}}g(x)dx)dr}{P}E[\theta|r\in W_a,x\in A(r)]+\frac{1-\alpha}{P}\mu_0r_0,
\end{align*}
where $P=1-\alpha+\alpha\beta \int_{r\in W_a}f(r)(\int_{\{x|x\in A(r)\}}g(x)dx)dr$, which is the total probability that there is no disclosure by the investor. Because $\theta_{ND}<r_0\mu_0$, we know $E[\theta|r\in W_a,x\in A(r)]<\theta_{ND}<r_0\mu_0$. If $W_a$ has non-zero measure, there exists a type $(r_s,x_s)$ such that $r_s\in W_a$, $x_s\in A(r_s)$, and $r_sx_s<E[\theta|r\in W_a,x\in A(r)]$. However, if this type deviates and disclose both $r_s$ and $x_s$, the expected date 3 price is $E[P_2|t=(r_s,x_s),m=(r_s,x_s)]=r_sx_s<(1-q)r_sx_s+q\theta_{ND}=E[P_2|t=(r_s,x_s),m=(\emptyset,\emptyset)]$, which implies price volatility (i.e., the profit from the short position) will become strictly larger. Thus, $W_a$  must have zero measure, and our initial premise that $W_r$ has non-zero measure must be false, which implies the investor will always disclose $r$. 

From this point forward, we will complete the proof by showing that $W_b$ is empty.

\subsubsection*{Step 3: The disclosure strategy when $r\in W_b$}
The definition of $W_b$ does not pin down the disclosure strategy of types who are informed about both $r$ and $x$ when $r\in W_b$. The following lemma rules out the possibility $d(r,x)= (r,\emptyset)$ when $r\in W_b$.
\begin{lemma}
	\label{unravel}
$d(r,x)\neq (r,\emptyset)$ when $r\in W_b$.
\end{lemma}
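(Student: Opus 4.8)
The plan is to argue by contradiction, exploiting the fact that when $r\in W_b$ the message $(r,\emptyset)$ becomes a ``clean'' signal sent only by informed types, so that market consistency forces its non-revelation price to equal the \emph{true} conditional mean of $x$ over the withholding region---a requirement I will show is self-contradictory.

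First I would suppose that the set $S=\{x: d(r,x)=(r,\emptyset)\}$ has positive measure. Because $r\in W_b$ means the partially informed type with this $r$ plays $d(r,\emptyset)=(\emptyset,\emptyset)$, and because truthful disclosure forces the $r$-component of any message to match the true $r$, the only senders of $(r,\emptyset)$ are informed types $(r,x)$ with $x\in S$. Hence $(r,\emptyset)$ is on-path, and Bayes' rule pins down its non-revelation price as $r\mu_S$ with $\mu_S=E[x\mid x\in S]$. The crucial point---and the only place $r\in W_b$ is used---is that there is no dilution by partially informed (or uninformed) senders, so $\mu_S$ is \emph{exactly} the conditional mean of the withheld $x$'s, not a mixture with $\mu_0$. (This contrasts with the baseline of Proposition \ref{x disclosure}, where the partially informed types also send $(r,\emptyset)$, and the resulting $\mu_0$-term is precisely what makes the analogous fixed point solvable.)

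Next I would use optimality within $S$. Every type in $S$ weakly prefers $(r,\emptyset)$ to the elaborate report $(r,x)$, so by Lemma \ref{volatility} (applied with $\mu_{ND}=\mu_S$) we have $S\subseteq\{x:\pi_q(x)\ge\pi(x)\}$. I would then solve this inequality explicitly. Writing $u=rx-r_0\mu_0$ and $w=r\mu_S-r_0\mu_0$, the condition $|qu+(1-q)w|\ge|u|$ factors as $(1-q)(w-u)\big((1+q)u+(1-q)w\big)\ge0$, whose roots are $u=w$ (i.e.\ $x=\mu_S$) and $u=-\tfrac{1-q}{1+q}w$. Since the quadratic in $u$ opens downward, $\{\pi_q\ge\pi\}$ is a closed interval with $x=\mu_S$ as one endpoint, lying \emph{entirely} on one side of $\mu_S$: above it when $r\mu_S<r_0\mu_0$ and below it when $r\mu_S>r_0\mu_0$ (with the degenerate case $r\mu_S=r_0\mu_0$ collapsing the set to the single point $\mu_S$). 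The geometric reason is simply that at $x=\mu_S$ the simple and the elaborate reports deliver the same expected price $r\mu_S$, so $\pi_q(\mu_S)=\pi(\mu_S)$.

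Finally I would close the contradiction: since $S\subseteq\{\pi_q\ge\pi\}$ lies on one side of $\mu_S$ and has positive measure under the continuous density $g$, its conditional mean satisfies $E[x\mid x\in S]\neq\mu_S$, contradicting the consistency requirement $\mu_S=E[x\mid x\in S]$ from the first step. Hence $S$ has measure zero, i.e.\ $d(r,x)\neq(r,\emptyset)$ when $r\in W_b$. I expect the main obstacle to be the third step---correctly identifying the withholding region and recognizing that $\mu_S$ is always one of its endpoints---since this is exactly what turns belief consistency into the impossible fixed point $\mu_S=E[x\mid x\ge\mu_S]$ (or $\mu_S=E[x\mid x\le\mu_S]$). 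It is worth flagging two points that keep the argument clean: the extra option $(\emptyset,\emptyset)$ available to types in $S$ only shrinks $S$ further and so does not affect the conclusion, and no off-equilibrium refinement is needed here because $(r,\emptyset)$ is on-path once $S$ has positive measure, so the lemma lives purely at the PBE level as appropriate for Step 3.
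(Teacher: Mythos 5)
Your proof is correct, and it rests on exactly the observation the paper identifies as the crux: because $d(r,\emptyset)=(\emptyset,\emptyset)$ for $r\in W_b$, the message $(r,\emptyset)$ is sent only by types informed about $x$, so Bayes' rule prices it at the \emph{undiluted} conditional mean $r\,E[x\mid x\in S]$, with no $\mu_0$-weighting from partially informed senders. Where you differ is in how the contradiction is closed. The paper says its argument is ``essentially the same as Step 0'' (Lemma \ref{THM: r-disclosure_single}): it exhibits an extreme type in the withholding set whose elaborate report would induce strictly more volatility than the pooled price, i.e., a profitable deviation. You instead show that the preference set $\{\pi_q\ge\pi\}$ (with $\mu_{ND}=\mu_S$) factors into an interval having $x=\mu_S$ as an endpoint and lying weakly on one side of $\mu_S$, so that the consistency requirement $\mu_S=E[x\mid x\in S]$ is unsatisfiable for any positive-measure $S$. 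The two closings are equivalent in substance --- both exploit the absence of the dilution term that makes the analogous fixed point solvable in Proposition \ref{x disclosure} --- but your factorization of $|qu+(1-q)w|\ge|u|$ makes the endpoint structure explicit and handles the $q<1$ case cleanly, which the Step 0 template (written for $q$ absent) leaves implicit. Your two flagged points are also apt: the option $(\emptyset,\emptyset)$ is irrelevant to the inclusion $S\subseteq\{\pi_q\ge\pi\}$, and no refinement is needed because $(r,\emptyset)$ is on path whenever $S$ has positive measure, which is precisely the case being ruled out.
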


The proof of Lemma \ref{unravel} is essentially the same as Step 0: Because $d(r,\emptyset)=(\emptyset,\emptyset)$ when $r\in W_b$, the market is \textit{certain} that the investor is informed about $x$ upon observing $m=(r,\emptyset)$ with $r\in W_b$. This implies that $d(r,x)\neq (r,\emptyset)$ when $r\in W_b$, as otherwise some types can induce a higher price volatility (i.e., make a higher profit) by disclosing both $r$ and $x$. For details of the proof, please see Section 1 of Appendix S1. 

Lemma \ref{unravel} implies that if the investor is informed about both $r$ and $x$ and if $r\in W_b$, the disclosure strategy only has two possibilities: $d(r,x)\in\{(r,x),(\emptyset,\emptyset)\}$ (i.e., the message $(r,\emptyset)$ is off-equilibrium when $r\in W_b$). We have established in Lemma \ref{short} that when $d(r,x)=(\emptyset,\emptyset)$, the investor must take a short position. Using $\pi_q(r,x)$ to denote their profit (price volatility) in this case, we have $\pi_q(r,x)=P_1-E[P_2|t=(r,x),m=(\emptyset,\emptyset)]=r_0\mu_0-((1-q)\theta_{ND}+qrx)$. When $d(r,x)=(r,x)$, the investor takes a short position when $rx<r_0\mu_0$ and a long position when $rx>r_0\mu_0$. Thus, their profit (price volatility) is $\pi(r,x)=|rx-r_0\mu_0|$. The investor compares $\pi(r,x)$ with $\pi_q(r,x)$ to determine their optimal strategy. It is straightforward that $\pi(r,x)\le\pi_q(r,x)$ if and only if $rx\in[\theta_{ND},\frac{2r_0\mu_0-(1-q)\theta_{ND}}{1+q}]\equiv[\underline{\theta},\bar{\theta}]$, which is an non-empty interval because $\theta_{ND}<r_0\mu_0$. Thus, when the investor is informed about both $r$ and $x$ and $r\in W_b$, they withhold both when $rx\in[\underline{\theta},\bar{\theta}]$, and discloses both when $rx\notin[\underline{\theta},\bar{\theta}]$. The disclosure thresholds are determined by the following two equations:
\begin{align}
	&(1+q)\bar{\theta}+(1-q)\underline{\theta}=2r_0\mu_0, \label{bartheta}\\
	&\underline{\theta}=\theta_{ND}. \label{underlinetheta}
\end{align}
\subsubsection*{Step 4: Calculating $\theta_{ND}$}
Define 
\begin{align*}
	P&\equiv1-\alpha+\alpha(1-\beta)\int_{r\in W_b}f(r)dr+\alpha\beta \int_{r\in W_b}f(r)(\int_{\{x|rx\in[\underline{\theta},\bar{\theta}]\}}g(x)dx)dr\\
	&+\alpha\beta \int_{r\in W_a}f(r)(\int_{\{x|x\in A(r)\}}g(x)dx)dr,
\end{align*}
which is the total probability that there is no disclosure by the investor.

We can calculate $\theta_{ND}$ using Bayes' rule:
\begin{align}
	\theta_{ND}&=\frac{\alpha\beta \int_{r\in W_a}rf(r)(\int_{\{x|x\in A(r)\}}xg(x)dx)dr}{P}\notag\\
	&+\frac{\alpha\beta \int_{r\in W_b}rf(r)(\int_{\{x|rx\in[\underline{\theta},\bar{\theta}]\}}xg(x)dx)dr}{P}\notag\\
	&+\frac{\alpha(1-\beta)\mu_0\int_{r\in W_b}rf(r)dr}{P}\notag\\
	&+\frac{1-\alpha}{P}\mu_0r_0,\label{thetand}
\end{align}

We further decompose $W_b$ into two subsets, based on whether $r$ is bigger or smaller than the prior mean: $W_B=\{r|d(r,\emptyset)=(\emptyset,\emptyset),r\ge r_0\}$ and  $W_S=\{r|d(r,\emptyset)=(\emptyset,\emptyset),0<r<r_0\}$. The subscripts ``B'' and ``S'' represent big and small, respectively. Note that when $W_b$ is non-empty, \eqref{underlinetheta} has to hold. Plugging \eqref{underlinetheta} into \eqref{thetand} and using a similar technique as in \cite{jung1988disclosure} to simplify (i.e., integration by parts) leads to 
\begin{align}
	&\int_{r\in W_B}rf(r)(\beta\int_{\frac{\underline{\theta}}{r}}^{\frac{\bar{\theta}}{r}}(G(x)-G(\frac{\bar{\theta}}{r}))dx-(1-\beta)(\mu_0-\frac{\underline{\theta}}{r}))dr\notag\\
	+&\int_{r\in W_S}rf(r)(\beta\int_{\frac{\underline{\theta}}{r}}^{\frac{\bar{\theta}}{r}}(G(x)-G(\frac{\bar{\theta}}{r}))dx-(1-\beta)(\mu_0-\frac{\underline{\theta}}{r}))dr\notag\\
	+&\beta(\underline{\theta}\int_{r\in W_a}f(r)(\int_{\{x|x\in A(r)\}}g(x)dx)dr-\int_{r\in W_a}rf(r)(\int_{\{x|x\in A(r)\}}xg(x)dx)dr)\notag\\
	-&\frac{1-\alpha}{\alpha}(r_0\mu_0-\underline{\theta})\notag\\
	=&0\label{indiffc}.
\end{align}
We use $\zeta$ to denote the strategy profile characterized by the disclosure thresholds $\underline{\theta}$ and $\bar{\theta}$ that satisfy \eqref{bartheta}, \eqref{underlinetheta}, and \eqref{indiffc}.

\subsubsection*{Step 5: Equilibrium refinement}
Finally, we demonstrate that if $\zeta$ passes the self-signaling test in Definition \ref{self1}, then \eqref{indiffc} cannot hold as all 4 terms on the left-hand-side of \eqref{indiffc} are negative. This, in turn, implies that $W_b=\emptyset$, since \eqref{indiffc} must hold if $W_b\neq\emptyset$. The construction of self-signaling set involves applying the equilibrium properties in Proposition \ref{x disclosure}. The details of this step can be found at Section 1 of Appendix S1. 

\subsection*{Proof of Proposition \ref{x disclosure} for the case $r<r_0$}
As established in Lemma \ref{volatility}, the investor withholds $x$ if and only if $\pi_q(x)\ge\pi(x)$ ($\pi_q(x)$ and $\pi(x)$ are defined in equations \eqref{piq} and \eqref{pix}). The simplification of this inequality requires us to determine the relation between $r\mu_{ND}$ and $r_0\mu_0$. It can be easily shown that when $r\mu_{ND}\ge r_0\mu_0$, the equilibrium conditions have no solution, implying that no equilibrium exists in this case.\footnote{The proof of this claim is at Section 2 of Appendix S1.} When $r\mu_{ND}<r_0\mu_0$, it is straightforward that $\pi_q(x)\ge\pi(x)$ if and only if $x\in[\underline{x},\bar{x}]$, where the thresholds are determined by:
\begin{align}
	&\underline{x}=\mu_{ND}, \label{lower}\\
	&\bar{x}=\frac{2r_0\mu_0-(1-q)r\mu_{ND}}{(1+q)r}.\label{upper}
\end{align}
The last piece to pin down the equilibrium is to use Bayes' rule to calculate $\mu_{ND}$ in equilibrium: 
\begin{equation}
	\mu_{ND}=\frac{1-\beta}{1-\beta+\beta\int_{\underline{x}}^{\bar{x}}g(x)dx}\mu_0+\frac{\beta\int_ {\underline{x}}^{\bar{x}}g(x)dx}{1-\beta+\beta\int_{\underline{x}}^{\bar{x}}g(x)dx}E[x|x\in [\underline{x},\bar{x}]]. \label{Bayes}
\end{equation}

We can rewrite \eqref{lower}, \eqref{upper}, and \eqref{Bayes} as
\begin{align}
	&\beta\int_{\underline{x}}^{\bar{x}}(G(x)-G(\bar{x}))dx-(1-\beta)(\mu_0 -\underline{x})=0, \label{shortindiff1}\\
	&(1+q)\bar{x}+(1-q)\underline{x}=\frac{2r_0\mu_0}{r}, \label{shortindiff2}
\end{align}
where \eqref{shortindiff1} follows from \eqref{lower} and \eqref{Bayes} by using a similar technique as in \cite{jung1988disclosure} (i.e., integration by parts), and \eqref{shortindiff2} follows from \eqref{lower} and \eqref{upper}. 

\eqref{shortindiff1} and \eqref{shortindiff2} can be transformed to a single condition on  $\underline{x}$: 
\begin{align}\label{r<r_0}
	Q(\underline{x})\equiv\beta\int_{\underline{x}}^{\frac{2r_0\mu_0}{(1+q)r}-\frac{1-q}{1+q}\underline{x}}(G(x)-G(\frac{2r_0\mu_0}{(1+q)r}-\frac{1-q}{1+q}\underline{x}))dx-(1-\beta)(\mu_0 -\underline{x})=0.
\end{align}
Since $Q'(\underline{x})=\beta(G(\bar{x})-G(\underline{x}))+\beta\int_{\underline{x}}^{\bar{x}}\frac{1-q}{1+q}g(\bar{x})dx+(1-\beta)>0$, $Q(.)$ is an increasing function of $\underline{x}$. Furthermore, $Q(\frac{r_0\mu_0}{r})=-(1-\beta)(\mu_0-\frac{r_0\mu_0}{r})>0$ and $Q(\mu_0)<0$. Thus, there exists a unique $\underline{x}\in(\mu_0,\frac{r_0\mu_0}{r})$ that solves \eqref{r<r_0}, which confirms that the equilibrium exists and is unique. 

According to equations \eqref{piq} and \eqref{pix}, $\pi_q(x)=0$ when $x=\frac{r_0\mu_0-(1-q)r\mu_{ND}}{qr}$, and $\pi(x)=0$ when $x=\frac{r_0\mu_0}{r}$. Since $r\mu_{ND}<r_0\mu_0$ in equilibrium, we have $\frac{r_0\mu_0-(1-q)r\mu_{ND}}{qr}>\frac{r_0\mu_0}{r}$, which implies that the two intersections of $\pi_q(x)$ and $\pi(x)$ are located at the downward sloping part of $\pi_q(x)$, as shown in Figure \ref{equilibrium2}. In other words, when $x\in[\underline{x},\bar{x}]$, we have $(1-q)r\mu_{ND}+qrx<r_0\mu_0$ so that the expected date 3 price is lower than the date 1 price. Therefore, the investor must take a short position in regions B and C.

\subsection*{Proof of Proposition \ref{x disclosure} for the case $r>r_0$}
Following a similar procedure as the case $r<r_0$, we can show that no equilibrium exists in the case $r\mu_{ND}\le r_0\mu_0$. When, $r\mu_{ND}>r_0\mu_0$, $\pi_q(x)\ge\pi(x)$ if and only if $x\in[\underline{x},\bar{x}]$, where the thresholds are determined by:
\begin{align}
	&\bar{x}=\mu_{ND}, \label{lowerother}\\
	&\underline{x}=\frac{2r_0\mu_0-(1-q)r\mu_{ND}}{(1+q)r}. \label{upperother} 
\end{align}
In addition, by Bayes' rule,
\begin{equation}
	\mu_{ND}=\frac{1-\beta}{1-\beta+\beta\int_{\underline{x}}^{\bar{x}}g(x)dx}\mu_0+\frac{\beta\int_ {\underline{x}}^{\bar{x}}g(x)dx}{1-\beta+\beta\int_{\underline{x}}^{\bar{x}}g(x)dx}E[x|x\in [\underline{x},\bar{x}]]. \label{Bayesother}
\end{equation}
\eqref{lowerother}-\eqref{Bayesother} simplify to:
\begin{align}
	&\beta\int_{\underline{x}}^{\bar{x}}(G(x)-G(\underline{x}))dx-(1-\beta)(\mu_0 -\bar{x})=0, \label{longindiff1} \\
	&(1-q)\bar{x}+(1+q)\underline{x}=\frac{2r_0\mu_0}{r}, \label{longindiff2} 
\end{align}

Combining \eqref{longindiff1} and \eqref{longindiff2} leads to
\begin{align}\label{r>r_0}
	R(\bar{x})\equiv\beta\int_{\frac{2r_0\mu_0}{(1+q)r}-\frac{1-q}{1+q}\bar{x}}^{\bar{x}}(G(x)-G(\frac{2r_0\mu_0}{(1+q)r}-\frac{1-q}{1+q}\bar{x}))dx-(1-\beta)(\mu_0-\bar{x})=0.
\end{align}
$R(.)$ is an increasing function of $\bar{x}$ as $R'(\bar{x})=\beta(G(\bar{x})-G(\underline{x}))+\beta\int_{\underline{x}}^{\bar{x}}\frac{1-q}{1+q}g(\underline{x})dx+(1-\beta)>0$. Furthermore, $R(\frac{r_0\mu_0}{r})<0$ and $R(\mu_0)>0$. Thus, there exists a unique $\bar{x}\in(\frac{r_0\mu_0}{r},\mu_0)$ that solves \eqref{r>r_0}, which confirms that the equilibrium exists and is unique. Finally, the proof for the investor taking a long position when issuing a simple report is similar to the case $r<r_0$ and is therefore omitted.

\subsection*{Proof of Proposition \ref{maincs} for the case $r<r_0$}
Let $K(\underline{x},\bar{x},\beta)\equiv\beta\int_{\underline{x}}^{\bar{x}}(G(x)-G(\bar{x}))dx-(1-\beta)(\mu_0 -\underline{x})=0$. 
Differentiating \eqref{shortindiff1} and \eqref{shortindiff2} with respect to $\beta$ and applying the Implicit Function Theorem, we have 
\begin{align*}
&\frac{\partial \bar{x}}{\partial \beta}=\frac{-K_{\beta}}{K_{\bar{x}}-\frac{1+q}{1-q}K_{\underline{x}}},\\
&\frac{\partial \underline{x}}{\partial \beta}=\frac{-K_{\beta}}{K_{\underline{x}}-\frac{1-q}{1+q}K_{\bar{x}}}.
\end{align*}
Since $K_{\underline{x}}>0$, $K_{\bar{x}}<0$, and $K_{\beta}<0$, we have $\frac{\partial \bar{x}}{\partial \beta}<0$ and $\frac{\partial \underline{x}}{\partial \beta}>0$.

Differentiating \eqref{shortindiff1} and \eqref{shortindiff2} with respect to $q$ and applying the Implicit Function Theorem, we have 
\begin{align*}
&\frac{\partial \bar{x}}{\partial q}=\frac{\underline{x}-\bar{x}}{1+q-(1-q)\frac{K_{\bar{x}}}{K_{\underline{x}}}},\\
&\frac{\partial \underline{x}}{\partial q}=\frac{\underline{x}-\bar{x}}{1-q-(1+q)\frac{K_{\underline{x}}}{K_{\bar{x}}}}.
\end{align*}
Since $K_{\underline{x}}>0$ and $K_{\bar{x}}<0$, we have $\frac{\partial \bar{x}}{\partial q}<0$ and $\frac{\partial \underline{x}}{\partial q}<0$.

\subsection*{Proof of Proposition \ref{maincs} for the case $r>r_0$}

Let $J(\underline{x},\bar{x},\beta)\equiv\beta\int_{\underline{x}}^{\bar{x}}(G(x)-G(\underline{x}))dx-(1-\beta)(\mu_0 -\bar{x})$. 
Differentiating \eqref{longindiff1} and \eqref{longindiff2} with respect to $\beta$ and applying the Implicit Function Theorem, we have 
\begin{align*}
&\frac{\partial \bar{x}}{\partial \beta}=\frac{-J_{\beta}}{J_{\bar{x}}-\frac{1-q}{1+q}J_{\underline{x}}},\\
&\frac{\partial\underline{x}}{\partial\beta}=\frac{-J_{\beta}}{J_{\underline{x}}-\frac{1+q}{1-q}J_{\bar{x}}}.
\end{align*}
Since $J_{\underline{x}}<0$, $J_{\bar{x}}>0$, and $J_{\beta}>0$, we have $\frac{\partial \bar{x}}{\partial \beta}<0$ and $\frac{\partial \underline{x}}{\partial\beta}>0$.

Differentiating \eqref{longindiff1} and \eqref{longindiff2} with respect to $q$ and applying the Implicit Function Theorem, we have
\begin{align*}
&\frac{\partial \bar{x}}{\partial q}=\frac{\bar{x}-\underline{x}}{1-q-(1+q)\frac{J_{\bar{x}}}{J_{\underline{x}}}},\\
&\frac{\partial \underline{x}}{\partial q}=\frac{\bar{x}-\underline{x}}{1+q-(1-q)\frac{J_{\underline{x}}}{J_{\bar{x}}}}.
\end{align*}
Since $J_{\underline{x}}<0$, $J_{\bar{x}}>0$, we have $\frac{\partial \bar{x}}{\partial q}>0$ and $\frac{\partial \underline{x}}{\partial q}>0$.

\subsection*{Proof of Proposition \ref{prop:firm response}}
We first prove that if $r$ is always disclosed, the investor's and firm's disclosure strategies of $x$ are characterized by Parts 2-4 of Proposition \ref{prop:firm response}. We then verify that the investor always disclosing $r$ is an equilibrium.

\subsubsection*{Step 1: The disclosure strategies of $x$ when $r<r_0$}
We still use $\mu_{ND}$ to denote the market's posterior expectation of $x$ upon non-disclosure of $x$.  Note that $\mu_{ND}$ is a constant in equilibrium. This means that the firm will disclose $x$ if and only if $x$ is above $\mu_{ND}$. 

Now consider the disclosure decision by the investor. Following a similar argument as Proposition \ref{x disclosure}, the investor takes a short position whenever issuing a simple report, and the equilibrium features a non-disclosure interval $[\underline{x}_p,\bar{x}_p]$. At the lower threshold $\underline{x}_p$, the investor is indifferent between shorting plus withholding and shorting plus disclosure. When the investor withhold $\underline{x}_p$, they understand that with some probability $w\in\{0,p\}$, $\underline{x}_p$ will be disclosed by the firm. Note that at this point, we only know $w$ is either $0$ or $p$, as the informed firm may withhold $\underline{x}_p$. Thus, the indifference condition implies that
\begin{align*}
(1-w)r\mu_{ND}+wr\underline{x}_p=r\underline{x}_p,
\end{align*}
which simplifies to
\begin{align}
	&\underline{x}_p=\mu_{ND}. \label{lowerxp}
\end{align}
Equation \eqref{lowerxp} implies that the investor's lower disclosure threshold coincides with the firm's disclosure threshold, as both are equal to $\mu_{ND}$. Due to this observation, we know $w=p$, as when the investor withholds $x$, they understand that the informed firm will disclose $x$ for sure to increase the stock price.

At the upper threshold $\bar{x}_p$, the investor is indifferent between shorting plus withholding and longing plus disclosure. This indifference condition implies that
\begin{align*}
	r_0\mu_0-((1-p)r\mu_{ND}+pr\bar{x}_p)=r\bar{x}_p-r_0\mu_0,
\end{align*}
which further implies
\begin{align}
	&\bar{x}_p=\frac{2r_0\mu_0-(1-p)r\mu_{ND}}{(1+p)r}.\label{upperxp}
\end{align}

Note that non-disclosure can come from one of the following three events: (i) Both the firm and the investor are uninformed; (ii) The firm is uninformed, the investor is informed, and $x\in[\underline{x}_p,\overline{x}_p]$; (iii) The investor is uninformed, the firm is informed, and $x\in(-\infty,\underline{x}_p)$. Using Bayes' rule, we can write $\mu_{ND}$ as:
\begin{align}
	\mu_{ND}=&\frac{(1-\beta)(1-p)}{(1-\beta)(1-p)+\beta(1-p)\int_{\underline{x}_p}^{\bar{x}_p}g(x)dx+p(1-\beta)\int_{-\infty}^{\underline{x}_p}g(x)dx}\mu_0+\notag\\ 
	&\frac{\beta(1-p)\int_{\underline{x}_p}^{\bar{x}_p}g(x)dx}{(1-\beta)(1-p)+\beta(1-p)\int_{\underline{x}_p}^{\bar{x}_p}g(x)dx+p(1-\beta)\int_{-\infty}^{\underline{x}_p}g(x)dx}E[x |x\in [\underline{x}_p,\bar{x}_p]]+\notag\\ 
	&\frac{p(1-\beta)\int_{-\infty}^{\underline{x}_p}g(x)dx}{(1-\beta)(1-p)+\beta(1-p)\int_{\underline{x}_p}^{\bar{x}_p}g(x)dx+p(1-\beta)\int_{-\infty}^{\underline{x}_p}g(x)dx}E[x |x\in (-\infty,\underline{x}_p)]. \label{muNDextension}	
\end{align}
We can simplify \eqref{lowerxp}, \eqref{upperxp}, and \eqref{muNDextension} to the following two conditions:
\begin{align}
	&\frac{\beta}{1-\beta}\int_{\underline{x}_p}^{\bar{x}_p}(G(x)-G(\bar{x}_p))dx+\frac{p}{1-p}\int_{-\infty}^{\underline{x}_p}G(x)dx-(\mu_0-\underline{x}_p)=0, \label{shortindiff_ex}\\
	&(1+p)\bar{x}_p+(1-p)\underline{x}_p=\frac{2r_0\mu_0}{r}, \label{upper_xp}
\end{align}
where \eqref{shortindiff_ex} follows from \eqref{lowerxp} and \eqref{muNDextension} by using a similar technique as in \cite{jung1988disclosure} (i.e., integration by parts), and \eqref{upper_xp} follows from \eqref{lowerxp} and \eqref{upperxp}. 

Plugging \eqref{upper_xp} into \eqref{shortindiff_ex} leads to 
\begin{align*}
H(\underline{x}_p)\equiv &\frac{\beta}{1-\beta}\int_{\underline{x}_p}^{\frac{\frac{2r_0\mu_0}{r}-(1-p)\underline{x}_p}{1+p}}(G(x)-G(\frac{\frac{2r_0\mu_0}{r}-(1-p)\underline{x}_p}{1+p}))dx\\&+\frac{p}{1-p}\int_{-\infty}^{\underline{x}_p}G(x)dx-(\mu_0-\underline{x}_p)=0.
\end{align*}
$H(\underline{x}_p)$ is obviously an increasing function of $\underline{x}_p$ as all three terms increase in $\underline{x}_p$. Furthermore, we have $H(-\infty)<0$ and $H(\frac{r_0\mu_0}{r})>0$. By the intermediate value theorem, there exists a unique $\underline{x}_p\in(-\infty,\frac{r_0\mu_0}{r})$ that solves $H(\underline{x}_p)=0$. Therefore, the equilibrium thresholds are uniquely determined by equations  \eqref{shortindiff_ex} and \eqref{upper_xp}. 

\subsubsection*{Step 2: The disclosure strategies of $x$ when $r>r_0$}
We first conjecture that there exists an equilibrium where the investor takes a long position when they issue a simple report. That is, $r\mu_{ND}>r_0\mu_0$. 

In this strategy profile, following a similar argument as Proposition \ref{x disclosure}, the investor takes a long position whenever issuing a simple report, and the equilibrium features a non-disclosure interval $[\underline{x}_p,\bar{x}_p]$. At the upper threshold $\bar{x}_p$, the investor is indifferent between longing plus withholding and longing plus disclosure. When the investor withhold $\bar{x}_p$, they understand that with some probability $w\in\{0,p\}$, $\bar{x}_p$ will be disclosed by the firm. Note that at this point, we only know $w$ is either $0$ or $p$, as the informed firm may withhold $\bar{x}_p$. Thus, the indifference condition implies that
\begin{align*}
	(1-w)r\mu_{ND}+wr\bar{x}_p=r\bar{x}_p,
\end{align*}
which simplifies to
\begin{align}
	&\bar{x}_p=\mu_{ND}. \label{xpupper}
\end{align}
Equation \eqref{xpupper} implies that the investor's upper disclosure threshold coincides with the firm's disclosure threshold, as both are equal to $\mu_{ND}$. Due to this observation, we know $w=0$, as when the investor withholds $x$, they understand that the informed firm will not respond.

At the lower disclosure threshold $\underline{x}_p$, the investor is indifferent between longing plus withholding and shorting plus disclosure. This indifference condition implies that
\begin{align*}
	r\mu_{ND}-r_0\mu_0=r_0\mu_0-r\underline{x}_p,
\end{align*}
which further implies
\begin{align}
	&\underline{x}_p=\frac{2r_0\mu_0-r\mu_{ND}}{r}.\label{xplower}
\end{align}

Note that non-disclosure of $x$ can come from one of the following four events: (i) Both the firm and the investor are uninformed; (ii) The firm is uninformed, the investor is informed, and $x\in[\underline{x}_p,\bar{x}_p]$; (iii) The investor is uninformed, the firm is informed, and $x\in(-\infty,\bar{x}_p)$; (iv) Both the firm and the investor are informed, and $x\in[\underline{x}_p,\bar{x}_p]$. Using Bayes' rule, we can write $\mu_{ND}$ as:
\begin{align}
	\mu_{ND}=&\frac{(1-\beta)(1-p)}{(1-\beta)(1-p)+\beta\int_{\underline{x}_p}^{\bar{x}_p}g(x)dx+p(1-\beta)\int_{-\infty}^{\bar{x}_p}g(x)dx}\mu_0+\notag\\ 
	&\frac{\beta\int_{\underline{x}_p}^{\bar{x}_p}g(x)dx}{(1-\beta)(1-p)+\beta\int_{\underline{x}_p}^{\bar{x}_p}g(x)dx+p(1-\beta)\int_{-\infty}^{\bar{x}_p}g(x)dx}E[x |x\in [\underline{x}_p,\bar{x}_p]]+\notag\\ 
	&\frac{p(1-\beta)\int_{-\infty}^{\bar{x}_p}g(x)dx}{(1-\beta)(1-p)+\beta\int_{\underline{x}_p}^{\bar{x}_p}g(x)dx+p(1-\beta)\int_{-\infty}^{\bar{x}_p}g(x)dx}E[x |x\in (-\infty,\bar{x}_p)]. \label{muNDextension1}	
\end{align}
We can simplify \eqref{xpupper}, \eqref{xplower}, and \eqref{muNDextension1} to the following two conditions:
\begin{align}
	&\frac{\beta}{(1-\beta)(1-p)}\int_{\underline{x}_p}^{\bar{x}_p}(G(x)-G(\underline{x}_p))dx+\frac{p}{1-p}\int_{-\infty}^{\bar{x}_p}G(x)dx-(\mu_0-\bar{x}_p)=0, \label{shortindiff_ex1}\\
	&\underline{x}_p+\bar{x}_p=\frac{2r_0\mu_0}{r}. \label{upper_xp1}
\end{align}
Plugging \eqref{upper_xp1} into \eqref{shortindiff_ex1} leads to 
\begin{align*}
	T(\bar{x}_p)\equiv &\frac{\beta}{(1-\beta)(1-p)}\int_{\frac{2r_0\mu_0}{r}-\bar{x}_p}^{\bar{x}_p}(G(x)-G(\frac{2r_0\mu_0}{r}-\bar{x}_p))dx\\&+\frac{p}{1-p}\int_{-\infty}^{\bar{x}_p}G(x)dx-(\mu_0-\bar{x}_p)=0.
\end{align*}
$T(\bar{x}_p)$ is obviously an increasing function of $\bar{x}_p$ as all three terms increase in $\bar{x}_p$. We also have $T(+\infty)>0$. However, the sign of $T(\frac{r_0\mu_0}{r})=\frac{p}{1-p}\int_{-\infty}^{\frac{r_0\mu_0}{r}}G(x)dx-(\mu_0-\frac{r_0\mu_0}{r})$ is indeterminate. There exists a unique $\bar{x}_p\in(\frac{r_0\mu_0}{r},+\infty)$ that solves $T(\bar{x}_p)=0$ if and only if $T(\frac{r_0\mu_0}{r})=\frac{p}{1-p}\int_{-\infty}^{\frac{r_0\mu_0}{r}}G(x)dx-(\mu_0-\frac{r_0\mu_0}{r})<0$. When $T(\frac{r_0\mu_0}{r})>0$, there does not exist an equilibrium in which $r\mu_{ND}>r_0\mu_0$, as there is no solution to $T(\bar{x}_p)=0$. 

Following a similar procedure, we can show that there exists an equilibrium where the investor takes a short position when they issue a simple report (that is, $r\mu_{ND}<r_0\mu_0$) if and only if $T(\frac{r_0\mu_0}{r})=\frac{p}{1-p}\int_{-\infty}^{\frac{r_0\mu_0}{r}}G(x)dx-(\mu_0-\frac{r_0\mu_0}{r})>0$. In this case, the thresholds are uniquely determined by equations \eqref{shortindiff_ex} and \eqref{upper_xp}.

To sum up, when $r>r_0$, we have $r\mu_{ND}<r_0\mu_0$ if and only if the following condition holds:
\begin{align}
V(r)\equiv\frac{p}{1-p}\int_{-\infty}^{\frac{r_0\mu_0}{r}}G(x)dx-(\mu_0-\frac{r_0\mu_0}{r})>0.
\end{align}
Note that $V(r)$ is a decreasing function of $r$ and $V(r_0)>0$. Thus, when $V(+\infty)=\frac{p}{1-p}\int_{-\infty}^{0}G(x)dx-\mu_0<0$, $r\mu_{ND}<r_0\mu_0$ if and only if $r<\bar{r}$, where $\bar{r}$ is uniquely determined by
\begin{align}\label{barr}
\frac{p}{1-p}\int_{-\infty}^{\frac{r_0\mu_0}{\bar{r}}}G(x)dx-(\mu_0-\frac{r_0\mu_0}{\bar{r}})=0.
\end{align}

When $V(+\infty)=\frac{p}{1-p}\int_{-\infty}^{0}G(x)dx-\mu_0>0$, we have $V(r)>0$ for any $r$, and thus we are at a corner case where $\bar{r}=+\infty$. In this case, $r\mu_{ND}<r_0\mu_0$ always holds.

\subsubsection*{Step 3: Prove that the investor always disclosing $r$ is an equilibrium}
In the strategy profile defined in Parts 2-3 of Proposition \ref{prop:firm response}, it is obvious that all types earn positive expected profit. Upon observing the message $m=(\emptyset,\emptyset)$, the market believes that the investor's type is $t=(\emptyset,\emptyset)$ and sets price $P_2(\emptyset,\emptyset)=r_0\mu_0$. Thus, if the investor withholds $r$, their profit is $0$. This means no type has incentive to deviate from the strategies in Parts 2-3 of Proposition \ref{prop:firm response}. 

\subsection*{Proof of Corollary \ref{cs3} for the case $r<\bar{r}$}
Let $K(\underline{x}_p,\bar{x}_p,p,\beta)\equiv\frac{\beta}{1-\beta}\int_{\underline{x}_p}^{\bar{x}_p}(G(x)-G(\bar{x}_p))dx+\frac{p}{1-p}\int_{-\infty}^{\underline{x}_p}G(x)dx-(\mu_0-\underline{x}_p)$.

Differentiating \eqref{shortindiff_ex} and \eqref{upper_xp} with respect to $\beta$ and applying the Implicit Function Theorem, we have
 \begin{align*}
 	&\frac{\partial \bar{x}_p}{\partial \beta}=\frac{(1-p)\frac{K_{\beta}}{K_{\underline{x}_p}}}{(1+p)-(1-p)\frac{K_{\bar{x}_p}}{K_{\underline{x}_p}}}, \\
 	&\frac{\partial \underline{x}_p}{\partial \beta}=\frac{(1+p)\frac{K_{\beta}}{K_{\bar{x}_p}}}{(1-p)-(1+p)\frac{K_{\underline{x}_p}}{K_{\bar{x}_p}}}.
 \end{align*}
 Since $K_{\beta}<0$, $K_{\underline{x}_p}>0$ and $K_{\bar{x}_p}<0$, we have $\frac{\partial \bar{x}_p}{\partial \beta}<0$ and $\frac{\partial \underline{x}_p}{\partial \beta}>0$.

\subsection*{Proof of Corollary \ref{cs3} for the case $r>\bar{r}$}
Let $S(\underline{x}_p,\bar{x}_p,p,\beta)\equiv\frac{\beta}{(1-\beta)(1-p)}\int_{\underline{x}_p}^{\bar{x}_p}(G(x)-G(\underline{x}_p))dx+\frac{p}{1-p}\int_{-\infty}^{\bar{x}_p}G(x)dx-(\mu_0-\bar{x}_p)$.

Differentiating \eqref{shortindiff_ex1} and \eqref{upper_xp1} with respect to $\beta$ and applying the Implicit Function Theorem, we have
\begin{align*}
	&\frac{\partial \bar{x}_p}{\partial \beta}=\frac{\frac{S_{\beta}}{S_{\underline{x}_p}}}{1-\frac{S_{\bar{x}_p}}{S_{\underline{x}_p}}}, \\
	&\frac{\partial \underline{x}_p}{\partial \beta}=\frac{\frac{S_{\beta}}{S_{\bar{x}_p}}}{1-\frac{S_{\underline{x}_p}}{S_{\bar{x}_p}}}.
\end{align*}
 Since $S_{\beta}>0$, $S_{\underline{x}_p}<0$ and $S_{\bar{x}_p}>0$, we have $\frac{\partial \bar{x}_p}{\partial \beta}<0$ and $\frac{\partial \underline{x}_p}{\partial \beta}>0$.
\newpage
\singlespacing
\bibliographystyle{apalike}
\bibliography{short1}
\end{document}